\DeclareMathAlphabet{\mathpzc}{OT1}{pzc}{m}{it}
\newcommand{\last}{\operatornamewithlimits{last}}
\newcommand{\REV}{\operatornamewithlimits{rev}}
\newcommand{\tog}{\operatornamewithlimits{tog}}
\newcommand{\dia}{\operatornamewithlimits{dia}}
\newtheorem{theorem}{Theorem}
\newtheorem{lemma}[theorem]{Lemma}
\newtheorem{corollary}[theorem]{Corollary}
\newenvironment{proof}[1][Proof]{\begin{trivlist}
\item[\hskip \labelsep {\bfseries #1}]}{\end{trivlist}}
\newcommand{\rev}[1]{{\color{blue}#1}} 
\newcommand{\com}[1]{\textbf{\color{red} (COMMENT: #1) }} 
\newcommand{\response}[1]{\textbf{\color{green} (RESPONSE: #1)}}
\newcommand{\rev}[1]{#1}
\newcommand{\com}[1]{}
\newcommand{\response}[1]{}
\title{Complex Networks from Simple Rewrite Systems}
\author{Richard Southwell$^*$, Jianwei Huang$^*$, Chris Cannings$^@$
\ \\
$*$Information Engineering Department, The Chinese University of Hong Kong,\\
 Shatin, New Territories, Hong Kong,\\
richardsouthwell254@gmail.com \\
jianweihuang@gmail.com \\
\\
$@$School of Mathematics and Statistics, University of Sheffield,\\
Sheffield, S3 7RH, UK,\\
c.cannings@sheffield.ac.uk\\
}
\begin{document}
  \maketitle

  \begin{abstract}
Complex networks are all around us, and they can be generated by simple mechanisms. Understanding what kinds of networks can be produced by following simple rules is therefore of great importance. We investigate this issue by studying the dynamics of extremely simple systems where are `writer' moves around a network, and modifies it in a way that depends upon the writer's surroundings. Each vertex in the network has three edges incident upon it, which are colored red, blue and green. This edge coloring is done to provide a way for the writer to orient its movement. We explore the dynamics of a space of 3888 of these `colored trinet automata' systems. We find a large variety of behaviour, ranging from the very simple to the very complex. We also discover simple rules that generate forms which are remarkably similar to a wide range of natural objects. We study our systems using simulations (with appropriate visualization techniques) and analyze selected rules mathematically. We arrive at an empirical classification scheme which reveals a lot about the kinds of dynamics and networks that can be generated by these systems.
\end{abstract}

\section{Introduction}
Many advances in complex systems research have come from identifying simple deterministic systems which can generate complex behavior. For example, cellular automata \cite{wolfca} are used to model many systems in physics and biology, string rewrite \cite{lind} systems are used to model plant development, and chaotic differential equations \cite{ter} are used in many areas. The most popular models of complex network growth are probabilistic \cite{small,barabasi}. These models have yielded many insights, but they do not offer an explanation as to \emph{where} the complexity in networks comes from. The complexity in probabilistic models comes from the randomizing mechanisms they are based on.

In this paper we study a class of simple deterministic network growth models. Our systems can produce a wide variety of behaviour, ranging from the very simple to the very complex. They offer insights into how complex structures can be generated, because each aspect of their behaviour can be traced back to a deterministic cause. The simple nature of our models allows us to quickly generate exact pictures of their dynamics. This gives us the ability to easily explore the behaviour of large numbers of systems. By taking an unbiased look at the dynamics of many simple rules we see what types of network growth are easily generated by simple computational processes.

In Turing machines \cite{turing} a writer moves along a one dimensional tape, and rewrites symbols using local rules. Our \emph{network automata} systems are like a generalization of this idea -where the writer moves around a network, and rewrites it on a local level. The way the writer moves and modifies its local structure is determined by its surroundings. The network the writer runs on is a \emph{trinet} (i.e., each vertex has three connections). Trinets are also known as cubic graphs and three-regular graphs. There are many natural examples of trinets, such as two dimensional foams and polygonal networks formed by biological cells and cracks in the earth. We focus on trinets because they are easy to manipulate and expressive. Any network can be represented as a trinet by replacing vertices that have more than three connections with ``roundabout-like'' circles of vertices with three connections \cite{bol1}.

One of our goals is to find simple deterministic rules that generate complex dynamics. An obstacle to this goal is the inability of deterministic rules to break symmetries. For example, if the writer is at one of the corners of a cube then it has no deterministic way to select which of its ``similar looking'' neighbors it should move to next.  We circumvent this obstacle by supposing the edges of our trinet are colored red, blue and green in such a way that touching edges have different colors (such an edge coloring is also known as a Tait coloring \cite{tait}). This allows us to specify how the writer should move by stating which colored edge it should move along in different situations\footnote{We could remove the reliance of edge color in our systems, by replacing the red, blue and green edges with different uncolored structures, and modifying our rewrite rules accordingly. Although this would make the systems look more complicated.}.

Our main goal is to understand about what kinds of behaviour can be generated by colored trinet automata. To achieve this, we do a thorough exploration of the dynamics of a space of simple rules. Using pictures and analytic techniques we sort these rules into three classes (fixed points, repetitive growth and elaborate growth) according to their behaviour. We discuss these classes in sections \ref{class 1},\ref{class 2} and \ref{class 3}. In section \ref{other} we exhibit more general rules, which can produce other exotic behaviour such as persistent complex behaviour, periodically changing networks and production of hyperbolic space.
In section \ref{other} we also present many rules that can produce networks that look remarkably similar to real world objects.

We have tried to keep the main text as free from equations as possible. This was done to increase readership, and because complex phenomena are often better explained using pictures. We also studied our systems mathematically, and the appendix contains proofs which will give the mathematically inclined reader significant insights into some of our system's dynamics.


\subsection{Related Literature}\label{lit}

Many interesting deterministic network growth models have been previously considered. These include the growth models discussed in \cite{ticktock} and the biologically inspired models considered in \cite{rep1,rep2,rep3,rep4}, which can produce rich and complicated self replicating structures, despite extremely simple rules. Self replication in adaptive network models was also considered in \cite{jap}. In each of these cases, every part of the network gets updated in parallel. In our systems, by contrast, only a small piece of the network gets updated at any time. This makes our systems easier to implement because there is no need for distant parts of the network to have synchronized clocks. It is true that many real world networks grow in parallel, however evolving our systems for many time steps often leads to structural modifications that are equivalent to global/parrelel rewrite operations (see subsections \ref{cycsimpsec} and \ref{bouncesec}).

 Our work was inspired by the work of Tommaso Bolognesi \cite{bol1}, who studies the dynamics of planar trinet  automata. Like our models, these systems involve a writer which moves around trinet, making structural modifications as it goes. Unlike our systems, the models considered in \cite{bol1} circumvents the symmetry breaking problem by supposing the trinets are embedded in the plane. The way the writer behaves in these systems is governed by the structure of the faces formed by the planar embedding.  Although these models are fascinating, the fact that the dynamics depend on how the network is embedded makes it difficult to get a full picture of what is going on at any given time. Our approach (of considered edge-colored trinets, rather than planar embedded trinets) allows us to easily picture the complete dynamics (and rules) behind the systems we consider.

 The experimental approach we use was pioneered by Stephen Wolfram \cite{nks}, who uses simulations to reveal a vast array of simple programs that can generate complex dynamics.  Wolfram finds many network growth models that can generate complex dynamics and explores the idea that the physical universe could be generated by a simple network growth model. The idea that the universe can be generated in exact detail, by a simple program, is one of the most interesting conjectures from digital physics \cite{dig, cause, alex}. Wolfram suggests that the correct model could be found by doing an automated search of the simplest possibilities \cite{blog}. In order to do such \emph{universe hunting} one needs intuition about the kinds of things that simple adaptive network mechanisms can do. One of our aims is to provide some of this intuition.

\subsection{How Our Systems Work}\label{work}

A colored trinet automata is a dynamical system where a writer moves around the vertices of an edge-colored trinet; applying rewrite rules as it goes. Every time step some rewrite operation is applied about the writer's current location and then the writer moves. We start by considering only extremely simple rewrite operations, where the writer's current vertex is either replaced with a triangle or left unaltered. The action the writer takes on a given time step is determined by the colors of the edges interlinking its neighbours.

\begin{figure}
\centerline{\includegraphics[scale=0.6]{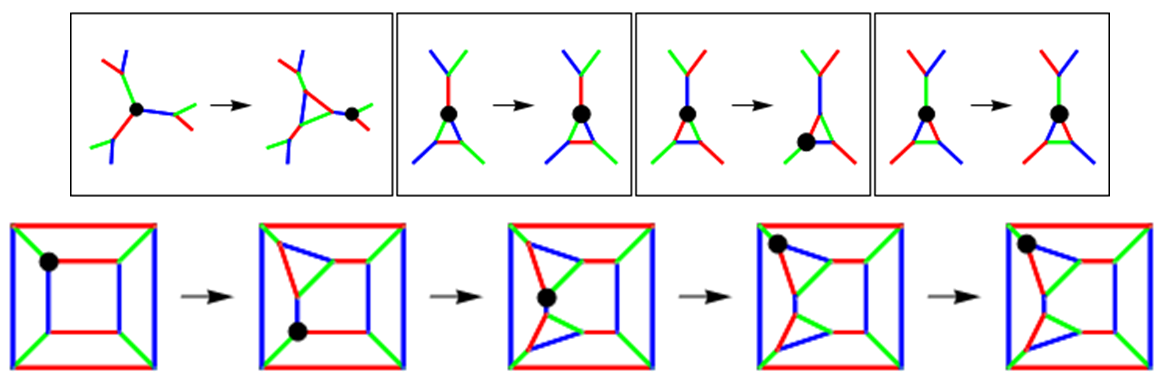}}
\vspace*{8pt}
\caption{An example colored trinet automata. At the top we show the rules of the system by indicating how the writer (the black vertex) moves and modifies the network in response to its surroundings. Underneath we show the system evolving for four updates, starting from the cube. The system reaches a fixed point after three updates.}
\label{firstfixed}
\end{figure}
The rules behind a colored trinet automata specify how the writer should move, and modify the network in response to its surroundings  (i.e., the colors of the edges interlinking the writer's neighbors). We picture the rules at the top of our figures, and the writer is represented by a black vertex. For example, the rule for the system shown in Figure \ref{firstfixed} can be described as follows;
\begin{enumerate}
\item If there are no edges linking the writer's neighbours then the writer then moves along a blue edge, and the writer's previous location is replaced with a triangle.
\item If there is exactly one edge linking the writer's neighbours, which is red, then take no action.
\item If there is exactly one edge linking the writer's neighbours, which is blue, then the writer along a red edge and the network is left unaltered.
\item If there is exactly one edge linking the writer's neighbours, which is green, then take no action.
\end{enumerate}
Notice how these instructions correspond to the pictures in the four boxes at the top of Figure \ref{firstfixed}. In general, a rule is just a specification of which structural modifications and movements the writer should perform in response to the four types of surroundings (depending on colors of edges interlinking the writer's neighbors) that the writer can have when there is no more than one edge linking its neighbors. Our rules do not specify which actions the writer should take in other situations, where there are two or more edges linking its neighbors. In these cases we suppose that no action is taken. Actually, the rules we consider never generate vertices with two or more linked neighbours and so these situations never occur anyway.



\subsection{The Rule Space}

In each of our investigations we used the cube (shown at the bottom left of Figure \ref{firstfixed}) as our initial condition. We studied the dynamics of the space of $12 \times 18 \times 18 = 3888$ rules depicted in Figure \ref{rulespace}.

\begin{figure}
\centerline{\includegraphics[scale=0.6]{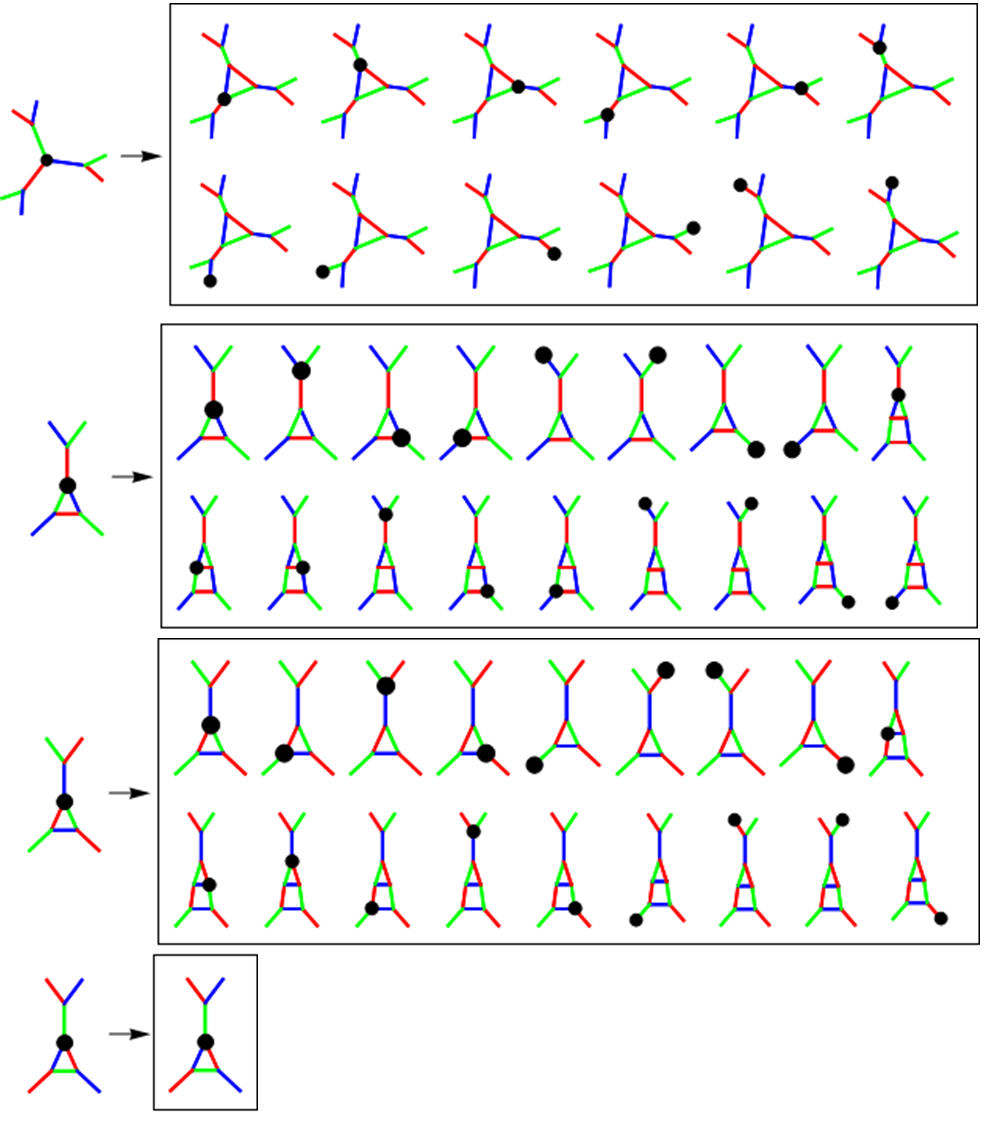}}
\vspace*{8pt}
\caption{An illustration of the space of rules we explored. A rule is specified by associating each of the four different types of surroundings with one of the images to its right.}
\label{rulespace}
\end{figure}

These are all the rules which involve possible triangle-replacement, and movement along two or less links, with two exceptions. Our first exception is that we always replace a vertex with no interlinked neighbours with a triangle.  We make this exception so that each of our rules is guaranteed to change our initial \emph{cube} network. Our second exception is that we never take action when there is a green edge linking the writer's neighbors. We make this exception to reduce the number of rules we must consider. If we drop this second constraint we find new kinds of behaviour, but the rule space becomes too large to thoroughly explore.

By severely limiting the kinds of operations our trinet automata can perform, we reduce the rule space to a manageable size and insure that it contains minimal systems that generate certain types of behaviour. We sort our $3888$ rules into three classes (fixed points, repetitive growth and elaborate growth) according to the long term dynamics they generate, starting from the cube. The following three sections are devoted to describing the behaviour of the rules in these three classes.

\section{Fixed Points}\label{class 1}

We say a system has reached a fixed point, where there comes a time after which the network no longer changes. $2918$ (about $75$ percent) of the our rules  eventually reach a fixed point. Our example system from Figure \ref{firstfixed} reaches a fixed point after three updates.
One reason such a large number of rules end up at a fixed point is that our rule space is such that whenever a green edge links a pair of the writer's neighbors a system becomes fixed. Indeed $2562$ of our rules halt their evolution because of this effect. These include the rule which grows the largest fixed structure, shown in Figure \ref{bigfixed}.

\begin{figure}
\centerline{\includegraphics[scale=0.7]{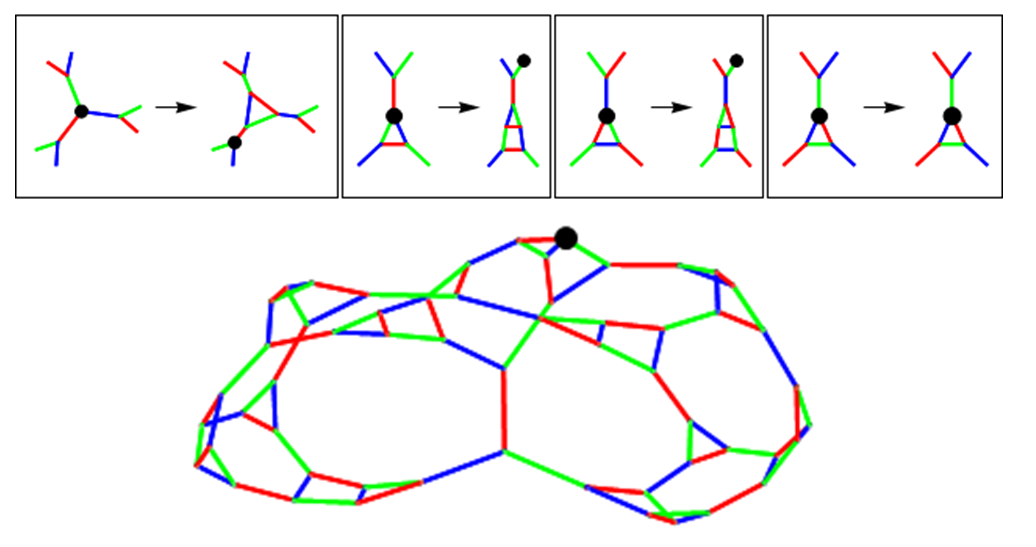}}
\vspace*{8pt}
\caption{Of all our rules, this one generates the largest static network (with $56$ vertices). It takes $26$ updates for the cube to change into this network.}
\label{bigfixed}
\end{figure}

In addition to the fixed points where the writer halts, there are \emph{dynamic-writer} fixed point where the writer continues to move forever, even after the network has become static. $162$ of the rules in class $1$ evolve into dynamic-writer fixed points. In $130$ of these rules the writer ends up doing a period two orbit (see Figure \ref{per2}), in the others the writer ends up doing a period four orbit.

\begin{figure}
\centerline{\includegraphics[scale=0.5]{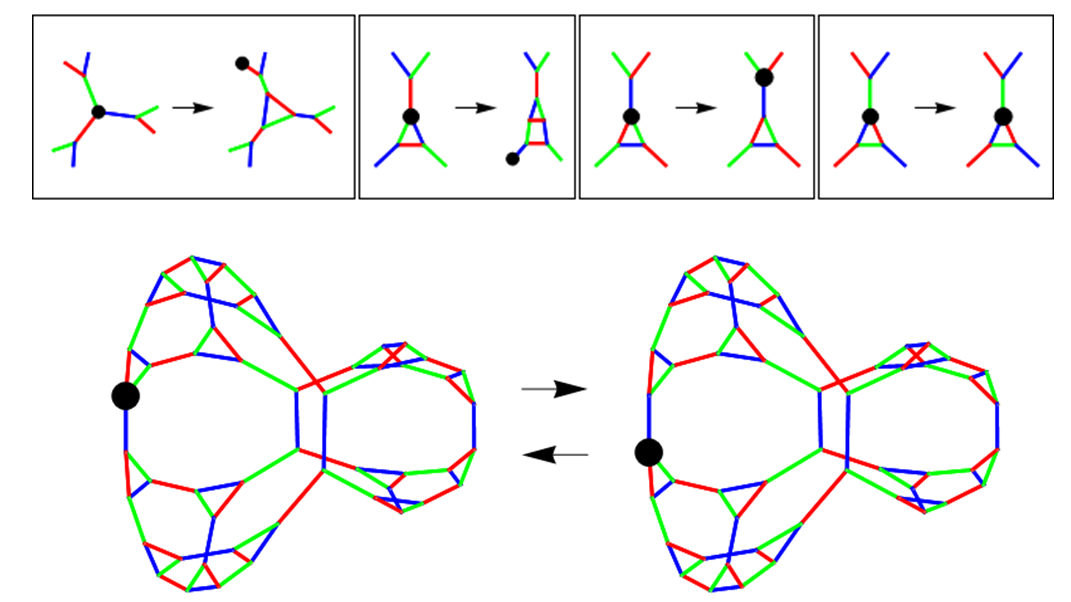}}
\vspace*{8pt}
\caption{Of all our rules, this one takes the longest ($34$ updates) to reach a static network (which has $52$ vertices). The writer continues to move in the period two orbit shown after this network has been generated.}
\label{per2}
\end{figure}

\section{Repetitive Growth}\label{class 2}

The next most complex type of behaviour observed is repetitive growth. This is characterized by the feature that the structure continues to grow, while the writer is ``trapped'' within a particular region of the network -with the form of its surroundings changing periodically. Repetitive growth occurs because writer's surroundings induce it to generate more structure around itself of the same form. $840$ of our rules eventually generate repetitive growth (see Figure \ref{repsimp}).

\begin{figure}
\centerline{\includegraphics[scale=0.7]{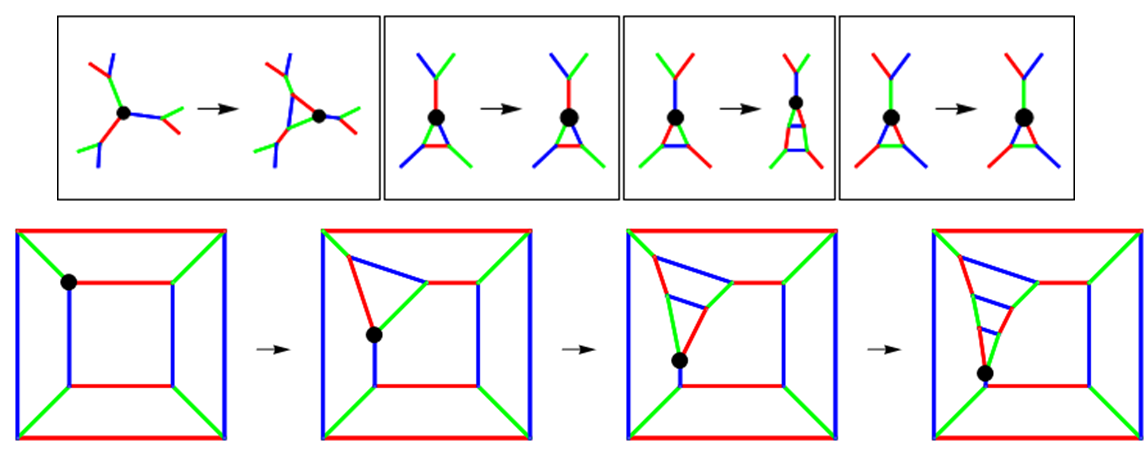}}
\vspace*{8pt}
\caption{A rule that generates simple repetitive growth evolving over the first four updates. Once the writer's surroundings are such that a pair of its neighbors are linked by a blue edge, the writer performs an update which leads to similar surroundings occurring again on the next time step.}
\label{repsimp}
\end{figure}

Let us define repetitive growth more precisely. \emph{Period $p$ repetitive growth is occurring at time $t$} when the network grows arbitrarily large eventually, and there is a distance $r$ such that the network on vertices within a distance $r$ of the writers position at time $t$ looks identical to the network on vertices within a distance $r$ of the writer at time $t+p$, and the writer never moves to an old vertex more than a distance $r-1$ from where it was at time step $t$ during the interval $[t,t+p]$. The rule shown in Figure \ref{repsimp} falls into period one repetitive growth because after the first update the structure within a distance one of the writer always looks similar. Evolving this rule for $t>0$ time steps results in a network with $8+2t$ vertices and a single triangle on the end of a long ``ladder like'' substructure.

Although one may test whether dynamics fit our definition using a computer, there are more practical ways to spot repetitive growth (see Figure \ref{rep2}). Networks undergoing repetitive growth tend to have an elongated linear, or circular shape because they are composed of a series of repeating substructures. Plotting the index of the writer over time is an excellent way to see dynamics. However it does depend on the way the vertices are indexed within the computer program, and this inevitably depends on more than just the pure topological dynamics of the system. In our case, when a vertex with index $v$ in an $L$ vertex network is replaced with a triangle, we give the vertex of this triangle with a red external edge an index $v$, and we give the other two vertices the other vertices of this triangle, with green and blue external edges, indices $L+1$ and $L+2$, respectively.


\begin{figure}
\centerline{\includegraphics[scale=0.7]{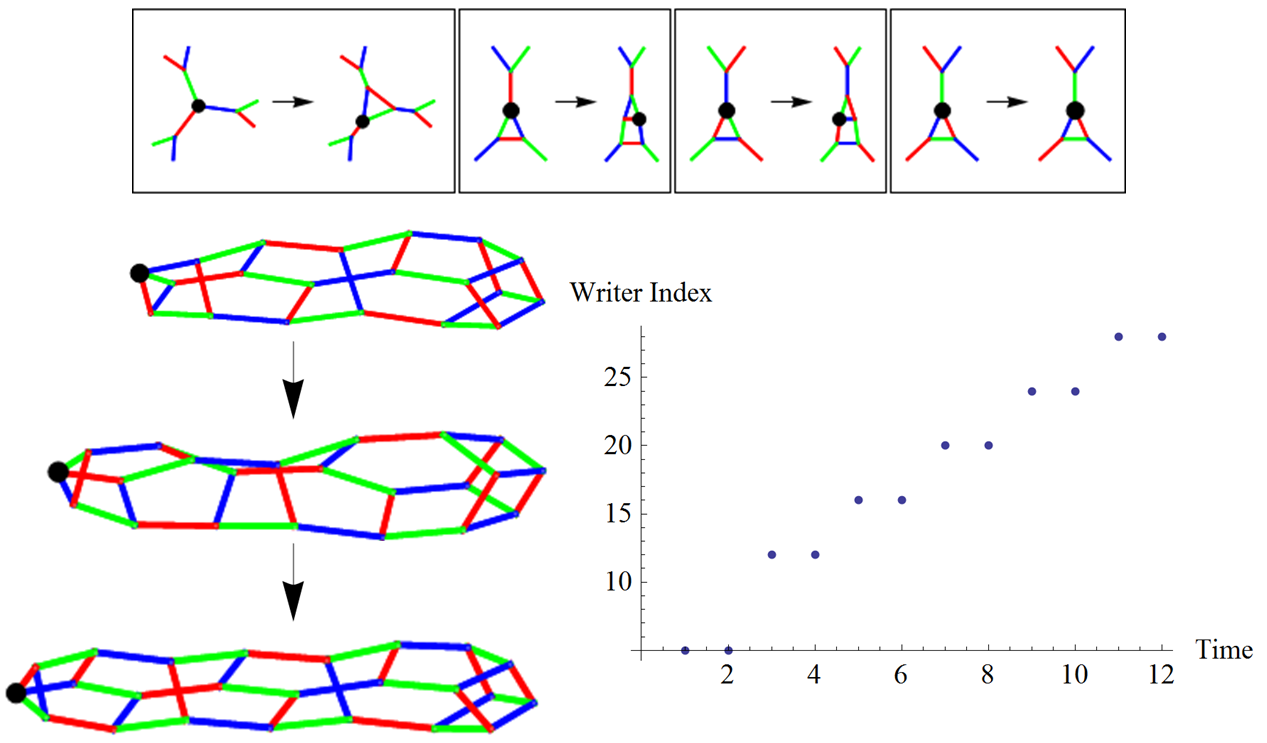}}
\vspace*{8pt}
\caption{On the left we show the networks generated on time steps $6$, $7$ and $8$ (reading downwards), when evolving from a cube. On the right we plot the index of the writer's position over time. Notice how the differences between the indices of the writer's position forms a periodic sequence. This is because, although network keeps growing, the writer keeps moving in the same relative way. This system has period two repetitive growth because the network induced on vertices within a distance two of the writer on time step $6$ looks identical to the induced on vertices within a distance two of the writer at time step $6+2$.}
\label{rep2}
\end{figure}

\subsection{Repetitive growth with long transients}

\begin{figure}
\centerline{\includegraphics[scale=0.75]{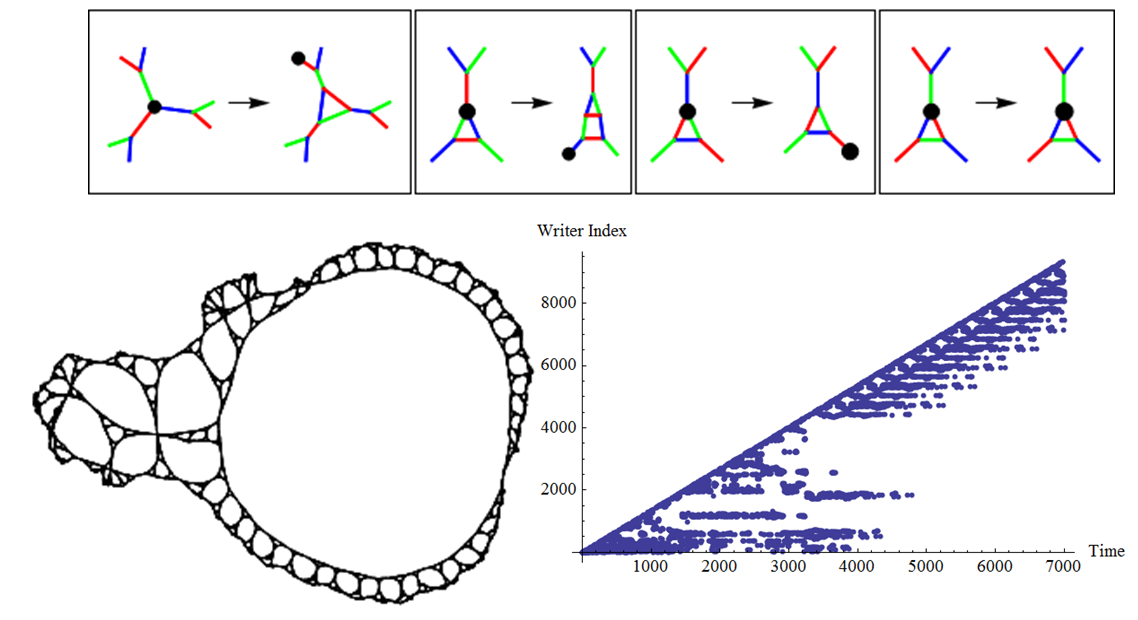}}
\vspace*{8pt}
\caption{This system has complex behaviour for the first $4994$ time steps, and then settles into period $454$ repetitive growth. We show the rule at the top.
 On the left we show an uncolored plot of the network present on time step $7000$. The large ``handle'' developing on the right of this picture is a symptom of the repetitive growth. The plot on the right shows the writer index over the first $7000$ time steps.}
\label{comp1}
\end{figure}

All but four of our repetitive growth rules settle into low (i.e., less than $12$) period repetitive growth quickly (in less than $40$ time steps). The other four rules eventually produce repetitive growth (see Figure \ref{comp1}), although it would perhaps be best to describe their behaviour as \emph{complex} because the systems have extremely long transients within which the structures appear to grow in a pseudo-random way (this is reminiscent of elementary cellular automata $110$ \cite{nks}). Interestingly, evolving the rule shown in Figure \ref{comp1} from other small networks produces different behaviour. The network known as $K_{3,3}$ is obtained by taking two clusters of three vertices, and linking each vertex in one cluster to each vertex in the other. Initiating this system from $K_{3,3}$ (instead of the cube) leads to dynamics which (again) eventually settle in period $454$ repetitive growth, although this time the transient lasts for $29964$ time steps. Another of the four rules which induce repetitive growth after a long transient is symmetrically equivalent to the rule shown in Figure \ref{comp1}, because it can be transformed into it by swapping the roles of the red and blue colors.

The other two rules which induce repetitive growth, after a long transient, are not symmetrically equivalent to the system shown in Figure \ref{comp1}, although they are symmetrically equivalent to each other. One of these rules is shown in Figure \ref{comp2}.

\begin{figure}
\centerline{\includegraphics[scale=0.75]{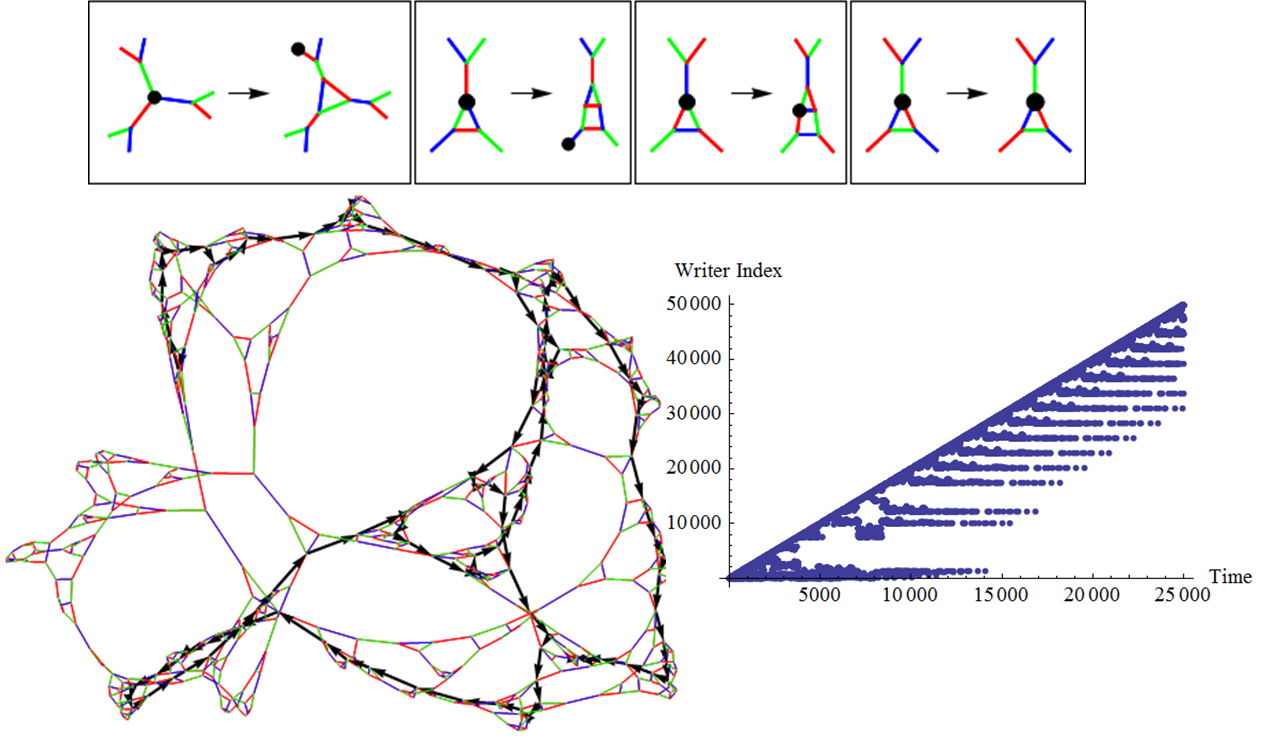}}
\vspace*{8pt}
\caption{On the left we show the network obtained after evolving this rule for $300$ time steps. The black arrows track the course that the writer has taken over the last $150$ time steps (note that the positioning of these arrows depends on how we indexed our vertices). The writer movement looks unpredictable during this early transience. On the right we plot the writer's index over time. The transient behaviour lasts $17615$ time steps, then the system settles into period $1355$ repetitive growth. Similar behaviour ensues when evolving from other initial conditions.}
\label{comp2}
\end{figure}

The ability of these systems to alter their behaviour after large amounts of time leads to the production of heterogeneous substructures.  The fact that these rules produce a complex ``ball'' of network followed by a long one dimensional structure is reminiscent of the way plants grow by complicating the structure around the initial seed and then growing out a long stalk (see the figures from section \ref{other}).

\section{Elaborate Growth}\label{class 3}

The remaining $130$ rules produce elaborate patterns of growth, leading to self similar networks. It turns out that in each of these cases, the way the writer moves is effectively one dimensional. The writer repeatedly traverses a one dimensional ``track'', applying rewrite rules as it goes, and lengthening the track with each traversal. In most cases, the fact that the writer is confined to a one dimensional substructure can be inferred directly from the rules. In the other cases this behaviour is revealed by using appropriate visualization techniques.

The writer can move backwards and forwards on the one dimensional track it is confined to. Although the writer can move in complex ways we can sort the rules into two subclasses according to the general nature of the writer's movement. Either the writer moves around and around a closed loop, in which case we say the rule has \emph{cyclic writer movement}, or the writer moves backwards and forwards over line, in which case we say the rule has \emph{bouncing writer movement}.

\subsection{A simple case with cyclic writer movement}\label{cycsimpsec}

$104$ of the rules with elaborate growth have cyclic writer movement. We show one of the simplest in Figure \ref{cycsimp}.
By looking at the rules of this system one can see that the writer must move in a one dimensional fashion. Whenever there are no red or green edges interlinking the writer's neighbors (i.e., whenever the system is not at a fixed point), the writer moves along a red edge and then a blue edge, and the writer's previous location is replaced with a triangle. The fact that the writer's movement can be expressed as a combination of steps along edges of \emph{only two} different colors implies that the writer's movement is effectively one dimensional. In particular, since any sufficiently long path along edges of alternating red/blue color must eventually return to its starting point the writer must always remain confined to a cyclic track. In this case the cyclic track starts out as the inner face of our cube (see Figure \ref{cycsimp1d}). On each update the writer replaces its current vertex with a triangle and moves two edges clockwise around the developing track of edges with alternating red/blue colors. Theorem \ref{cycsimptheorem} describes the global rewrite operation which the writer effectively performs with each complete traversal of its cyclic track. The amount of time successive traversals take doubles.

\begin{figure}
\centerline{\includegraphics[scale=0.65]{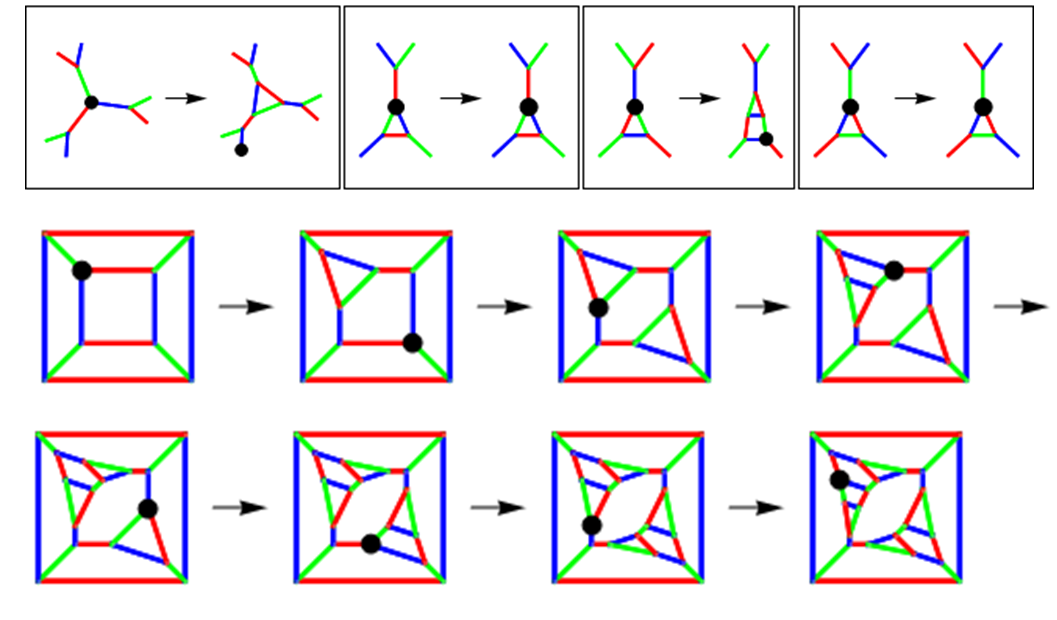}}
\vspace*{8pt}
\caption{One of the simplest rules with cyclic writer movement, evolving for seven updates.}
\label{cycsimp}
\end{figure}

\begin{figure}
\centerline{\includegraphics[scale=0.75]{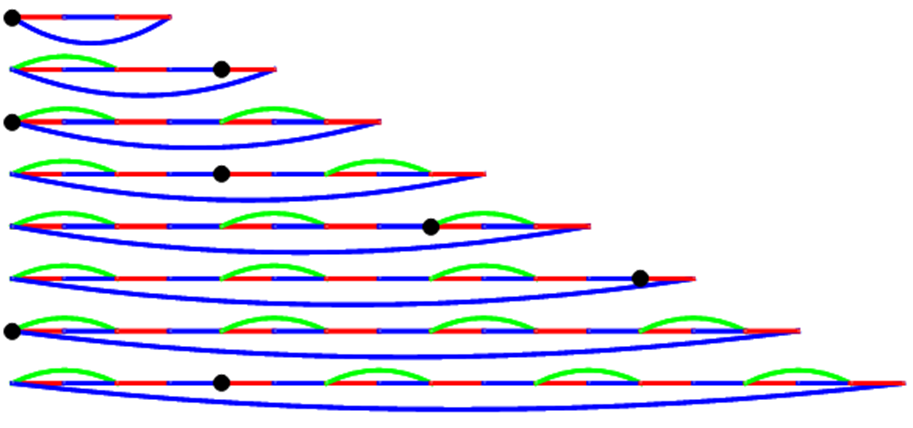}}
\vspace*{8pt}
\caption{An alternative ``one dimensional'' representation of the dynamics of the rule shown in Figure \ref{cycsimp}. Successive rows, reading downwards, show the relevant part of the network on successive time steps. Here the \emph{relevant part of the network} consists of network induced on the vertices which can be reached by moving along red or blue edges from the writer. We do not show green edges which are not part of triangles since these have no effect on dynamics. Each time step the writer replaces its current vertex with a triangle (effectively adding a red edge and a blue edge below a green arc) and moves two edges to the right.}
\label{cycsimp1d}
\end{figure}

\begin{theorem}\label{cycsimptheorem}
The way the rule shown in Figure \ref{cycsimp} evolves (with the cube as the initial condition) is such that for each $n \geq 2$, the network present on the $(2^{n+1}-2)$th time step can be obtained by taking the network present on the $(2^{n}-2)$th time step and then simultaneously replacing each vertex, with a red or blue edge interconnecting its neighbors, with a triangle (see Figure \ref{represent}).
\end{theorem}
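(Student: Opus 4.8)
The plan is to prove the statement by induction on $n$, carrying along a stronger structural description of the network $N_t$ present at time $t$, so that the inductive step becomes a bounded, local computation. The first ingredient is the one-dimensional reduction indicated in Figure~\ref{cycsimp1d}: because this rule only ever moves the writer along a red edge followed by a blue edge, the writer remains forever on a single cyclic ``track'' of edges of alternating colour, and any green edge that is not part of a triangle can be ignored, since such edges never change and never create a triangle, hence never affect whether a vertex has an edge joining two of its neighbours. I would then argue entirely in terms of the track together with the triangles it carries.

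The invariant I would use is: for every $n\ge 2$, the track of $N_{2^n-2}$ is, up to rotation, the cyclic word $(\tau\phi)^{2^{n-1}}$, where $\phi$ is a single ``free'' vertex whose three neighbours are pairwise non-adjacent, and $\tau$ is a ``triangle block'' consisting of three consecutive track vertices $\ell,m,r$ with the track edge $\ell m$ red, the track edge $mr$ blue and a green ``cap'' edge $\ell r$; and the writer sits on a vertex of type $\ell$. In such a network the vertices with a \emph{red} edge between two of their neighbours are exactly the right bases $r$, those with a \emph{blue} edge are exactly the left bases $\ell$, the apexes $m$ have a green edge there, and the free vertices as well as all off-track vertices have none; and one checks that replacing an $\ell$ (or an $r$) by a triangle yields again a standard triangle block and turns the former apex into a free vertex. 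Hence, on networks of this form, the operation ``replace every vertex with a red or blue edge between two of its neighbours by a triangle'' is exactly the substitution $\sigma\colon \tau\mapsto\tau\phi\tau,\ \phi\mapsto\phi$, and the theorem amounts to the claim $N_{2^{n+1}-2}=\sigma\bigl(N_{2^n-2}\bigr)$.

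This I would prove together with the invariant, by induction. The base case $n=2$ is a direct calculation: the red and blue edges of the cube form two $4$-cycles, one of which is the starting track, and the first two updates replace two opposite vertices of that $4$-cycle, leaving the track equal to $(\tau\phi)^2$ with the writer on a left base. For the inductive step I would trace one ``generation'', namely the $2^n$ updates from time $2^n-2$ to time $2^{n+1}-2$. On a left base $\ell$ the writer sees a blue edge between two neighbours, so it replaces $\ell$ by a triangle and, following one red then one blue edge, lands on the right base $r$ of the same block; since $\ell$ has just been rewritten, $r$ now has no edge between its neighbours, so the writer replaces $r$ too and, following one red then one blue edge, lands on the left base of the next triangle block, having passed over the intervening free vertex. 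Iterating around the cycle, the writer processes each of the $2^{n-1}$ triangle blocks in two updates, its cumulative effect being to send each $\tau$ to $\tau\phi\tau$ and leave each $\phi$ unchanged, i.e.\ $\sigma$; after $2^n$ updates it is back on a left base, so the invariant holds at time $2^{n+1}-2$, the elapsed time is exactly $2^n$, and $N_{2^{n+1}-2}=\sigma\bigl(N_{2^n-2}\bigr)$, which by the previous paragraph is the asserted rewrite.

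The part I expect to need the most care is the local analysis in the inductive step: one must write out the colours of the edges of the newly created triangle and check that each ``move two edges'' step lands on exactly the intended vertex --- the right base of the current block, then the left base of the next block --- and, in particular, the slightly counter-intuitive point that the writer never acts on a right base directly (there a red edge joins two of its neighbours, which by the rule leaves it fixed), but only once the adjacent left base has been rewritten, so that the \emph{set} of vertices replaced during a whole generation is precisely the left bases together with the right bases, as the ``red or blue edge between neighbours'' description demands. Additional bookkeeping is needed at the seam where the writer wraps around and re-enters the first, already-expanded block, and one should confirm that the off-track green edges --- and the green edges that in later generations link pairs of apexes --- never form a triangle that would break the one-dimensional reduction.
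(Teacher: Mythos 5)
Your proposal is correct and takes essentially the same approach as the paper's proof: both reduce the dynamics to the cyclic word of track-vertex types, show by induction that at time $2^n-2$ this word is a power of a four-letter block (your $\tau\phi$ is the paper's $bgr0$), and verify that each block is consumed in two updates which replace exactly its left and right bases while off-track vertices never acquire interlinked neighbours. The only organizational difference is that the paper first pins down the word at \emph{every} time step (Lemma \ref{1de}) and then aggregates over a generation, whereas you induct directly from one generation to the next.
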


Proofs are in the appendix.

\begin{figure}
\centerline{\includegraphics[scale=0.35]{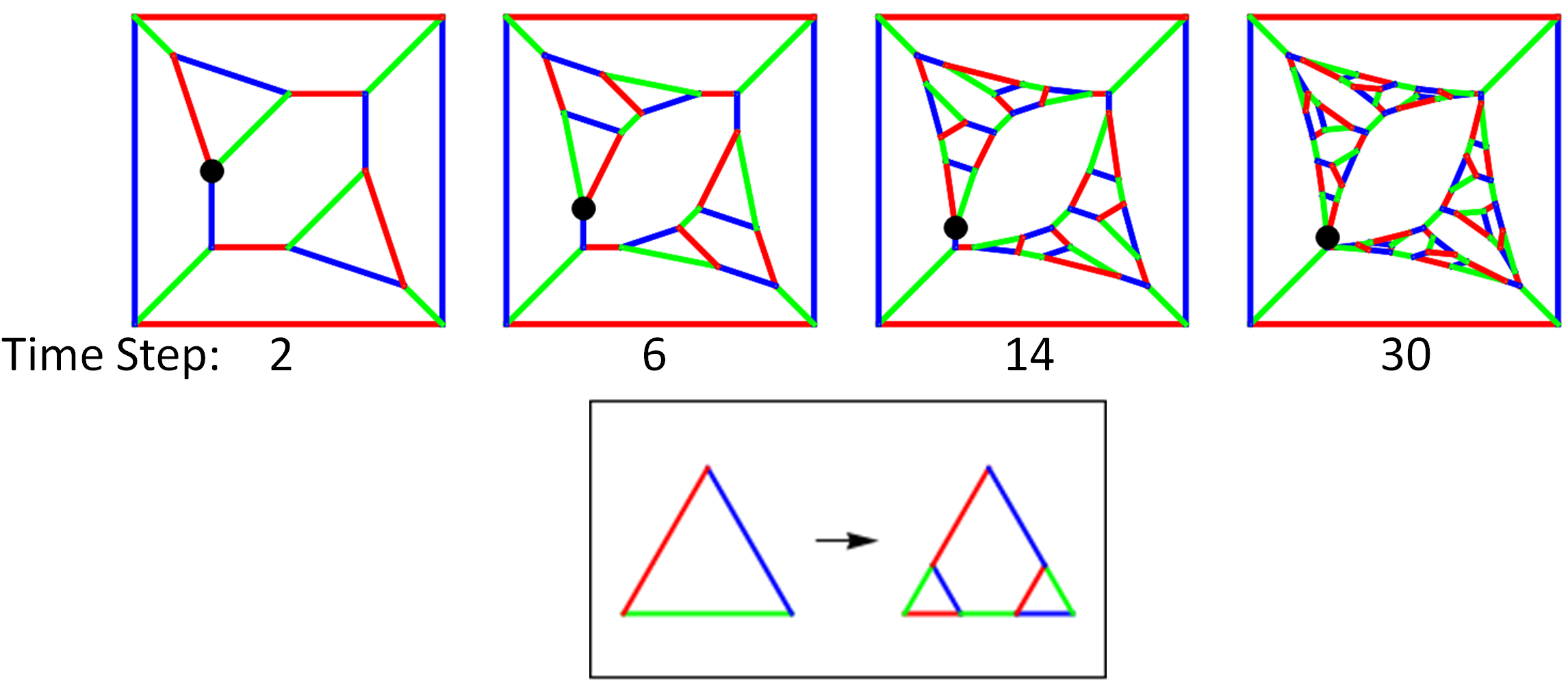}}
\vspace*{8pt}
\caption{The network generated by the system with rules shown in Figure \ref{cycsimp} on the first few time steps of the form $2^n-2: n \geq 2$. Between times of this form the writer makes a complete traversal of its cyclic path and effectively performs the rewrite operation shown at the bottom, globally.}
\label{represent}
\end{figure}

\subsection{A rule related to the golden ratio}\label{goldenrulesec}

In Figure \ref{goldenrule} we show a rule with cyclic writer movement with a complicated looking growth rate. It turns out that the growth rate can be described exactly in terms of the golden ratio $\phi = \frac{1+\sqrt{5}}{2}$.

\begin{figure}
\centerline{\includegraphics[scale=0.7]{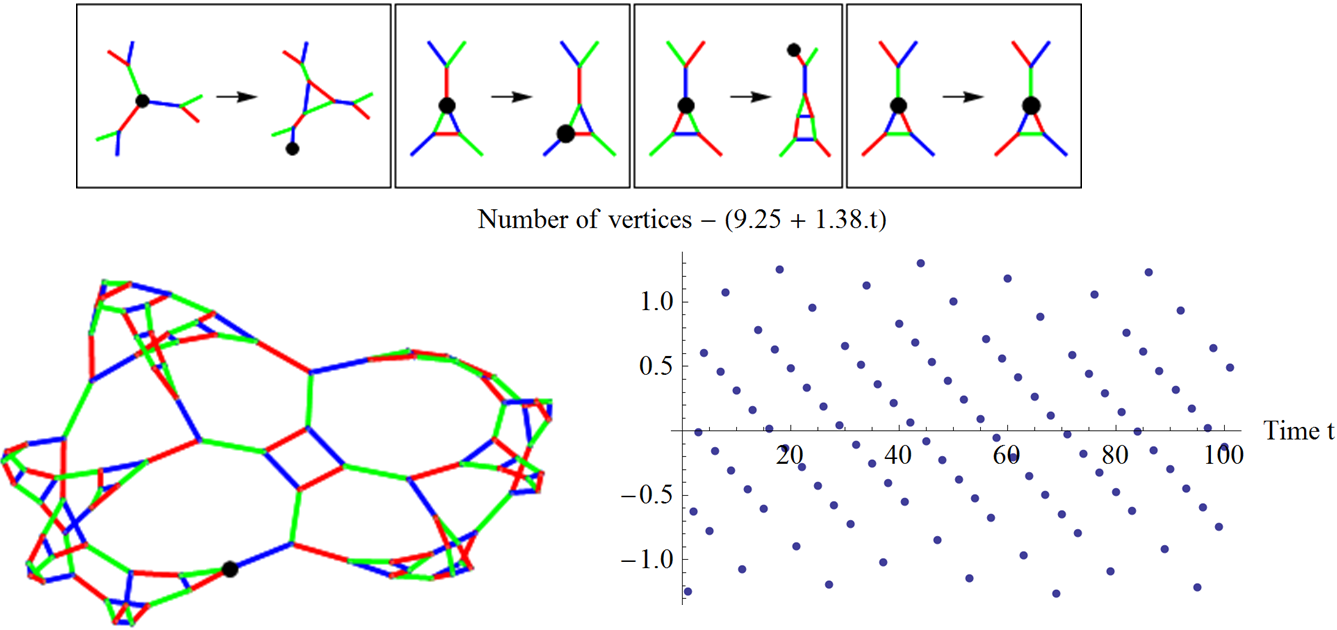}}
\vspace*{8pt}
\caption{On the left we show the network obtained by evolving this rule for $62$ time steps. The number of vertices in the network at time step $t$ grows approximately linearly with $t$. However the way the number of vertices deviates from its best linear fit looks complicated (as shown on the right).}
\label{goldenrule}
\end{figure}

\begin{theorem}\label{goldentheorem}
The number of vertices in the network obtained by evolving the rule shown in Figure \ref{goldenrule} for $t \geq 0$ time steps (starting from the cube) is $$8+2 \left\lceil \frac{1}{\phi^2} \left\lfloor \frac{t}{2} \right\rfloor \right\rceil + 2 \left\lfloor \frac{t+1}{2} \right\rfloor.$$
\end{theorem}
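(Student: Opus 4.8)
The plan is to collapse the apparently two-dimensional dynamics to a one-dimensional symbolic system, as was done for the rule of Figure~\ref{cycsimp} in Theorem~\ref{cycsimptheorem}, and then to show that the resulting symbolic dynamics carries a Fibonacci substitution, which is where $\phi$ enters. The one fact about the network that is needed throughout is elementary: the cube has $8$ vertices, and the only network-altering move the rule ever makes is to replace a degree-three vertex by a triangle, a net gain of two vertices; hence the number of vertices at time $t$ is $8+2C(t)$, where $C(t)$ is the number of triangle replacements carried out in steps $1,\dots,t$. So the whole problem is to count $C(t)$. Since the rule of Figure~\ref{goldenrule} has cyclic writer movement the writer steps only along red and blue edges, and as every maximal alternating red/blue path closes into a cycle the writer stays forever on one cyclic track, starting as a face of the cube. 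First I would encode the \emph{relevant part of the network} (the cyclic track together with the triangles hanging off it, as in Figure~\ref{cycsimp1d}) as a finite cyclic word $w_t$ over a small alphabet of local configurations, carrying the writer's position and heading; this word, and the local rewrite the writer performs on it, will be the vehicle for computing $C(t)$.

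Next I would read off, from the four boxes of the rule, the induced local rewrite on $w_t$: the writer inspects a bounded window and either replaces its vertex by a triangle and advances (lengthening the track) or merely slides across a stretch already processed. Following one complete traversal of the cyclic track, I expect to prove by induction on the traversal number $k$ that the word at the start of the $(k{+}1)$st traversal is the image, under a fixed morphism $\sigma$ of Fibonacci type (e.g.\ $A\mapsto AB,\ B\mapsto A$ after relabelling), of the word at the start of the $k$th traversal, the two letters distinguishing positions at which the writer will, versus will not, create a triangle on the next pass. It follows that traversal lengths grow like Fibonacci numbers, that the density of triangle-creating positions along the track is $1/\phi^{2}=2-\phi$, and that the sequence of ``create / do not create'' decisions the writer makes is obtained by interleaving a constant value over one residue class of time steps with a Fibonacci--Sturmian word of slope $1/\phi^{2}$ over the other. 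Pinning down $\sigma$ and proving this self-similarity is the main obstacle: it needs a careful case analysis of how the writer meets the boundary between the freshly grown and the previously grown portions of the track, and it is the one place where off-by-one errors must be chased down.

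Finally I would count $C(t)$. The contribution from the residue class on which the writer always creates a triangle is $\lceil t/2\rceil=\lfloor (t+1)/2\rfloor$; the contribution from the Fibonacci--Sturmian positions accumulates at density $1/\phi^{2}$ in the logical time $s=\lfloor t/2\rfloor$, and the standard Beatty/Sturmian identity for the number of $1$s in a prefix of the Fibonacci word turns this into exactly $\lceil s/\phi^{2}\rceil=\big\lceil \tfrac{1}{\phi^{2}}\lfloor t/2\rfloor\big\rceil$ once the phase is verified and the short transient on the cube is matched by hand to fix the additive constants (one checks the first several values of $t$ directly). Adding the two contributions and using $\#\text{vertices}=8+2C(t)$ from the first paragraph yields
\[
8+2\left\lceil \tfrac{1}{\phi^{2}}\left\lfloor \tfrac{t}{2}\right\rfloor\right\rceil+2\left\lfloor \tfrac{t+1}{2}\right\rfloor,
\]
as claimed. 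An alternative that sidesteps substitutions is to distil from the symbolic update a recurrence linking the traversal data at consecutive Fibonacci times and then verify the displayed closed form by strong induction on $t$; either way the real work is the Fibonacci bookkeeping of the middle step.
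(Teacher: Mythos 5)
Your plan is essentially the paper's proof: reduce the dynamics to a one-dimensional symbolic system on the cyclic red/blue track, show that one complete traversal acts as a golden-ratio substitution, split the triangle-creation count by the parity of $t$ (even steps always create a triangle; odd steps follow a Sturmian word of slope $1/\phi^2$), and finish with Beatty/Wythoff counting, exactly as the appendix does. The one detail to correct is that the substitution which actually arises is $0\to 01$, $1\to 011$ (expansion factor $\phi^2$, the fixed point being OEIS A096270), realized in the paper as a tag system whose full cycle applies the morphism globally with a variable two- or four-step timing per symbol, rather than the literal Fibonacci morphism $A\mapsto AB$, $B\mapsto A$ you suggest --- but your hedged ``after relabelling'' guess lands in the right family, and the remaining bookkeeping you outline is what closes the argument.
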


It is pleasing that this seemingly complex growth rate can be described simply in terms of the golden ratio, because this number appears in so many interesting places in the natural world. This colored trinet automata has similar qualitative behaviour to the one considered in subsection \ref{cycsimpsec}. Once again the writer goes around and around a cyclic track consisting of edges with alternating red/blue color. Again the system can be reduced to a one dimensional string rewrite system. The proof to Theorem \ref{goldentheorem} is based on relating this system to the binary rewrite system with rules $0 \rightarrow 01, 1 \rightarrow 011$.


\subsection{A rule with complex behaviour}\label{complexcycsec}

The rule with cyclic writer movement shown in Figure \ref{complexcyc} has a very complicated looking growth rate. Although this system can be transformed into a one dimensional rewrite system (in a similar way to the systems considered in subsections \ref{cycsimpsec} and \ref{goldenrulesec}) we have not been able to derive a formula for its growth rate.  This means this system, together with its equivalent red/blue reflection; stand out as the most complex rules in our entire set\footnote{All of the rules which generate fixed points and repetitive growth have trivial long term behaviour. Plotting the writer index over time reveals significant regularities in all the other rules exhibiting elaborate growth}. We are unable to predict the long term dynamics of these systems (although it appears the pseudo random growth pattern continues forever).


\begin{figure}
\centerline{\includegraphics[scale=0.8]{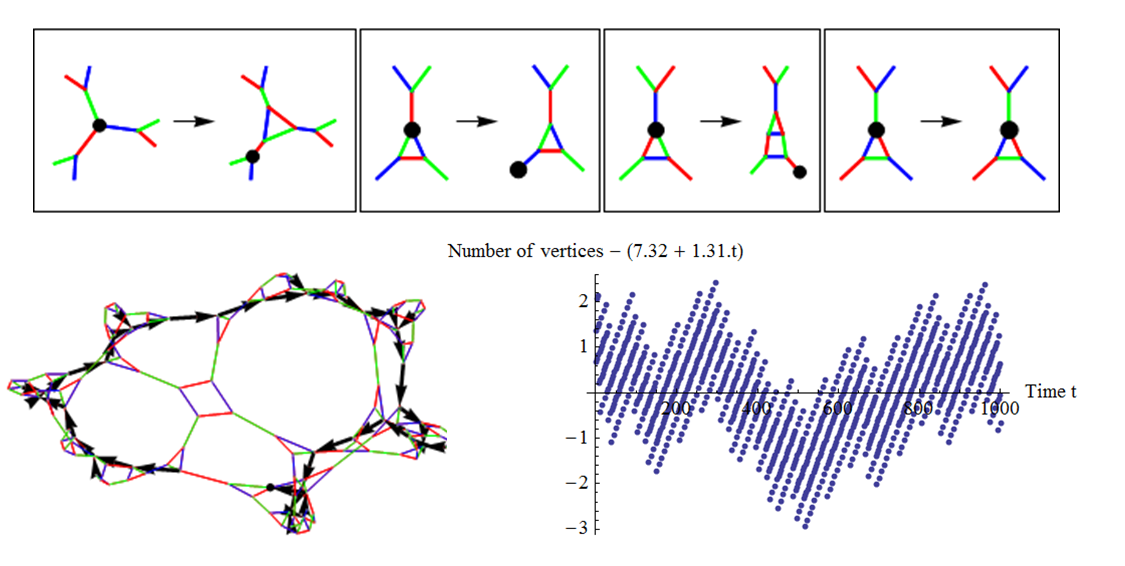}}
\vspace*{8pt}
\caption{On the left we show the structure obtained by evolving this rule for $100$ time steps. The black arrows track the positions of the writer over the previous $45$ time steps. On the right we show how the growth rate of the number of vertices deviates from its best linear fit. This plot reveals considerable complexity.}
\label{complexcyc}
\end{figure}

\subsection{A simple case with bouncing writer movement}\label{bouncesec}

The $26$ remaining rules with elaborate growth exhibit bouncing writer movement. In these rules the writer moves forwards and backwards along a one dimensional track, which grows with successive traversals. The rule in this class with the simplest behaviour is shown in Figure \ref{bounce}. This rule can be represented in a one dimensional manner similar to Figure \ref{cycsimp1d}, except that this time the writer is confined to moving on the red and green edges.


It appears that the writer effectively performs a global rewrite operation every time it makes a traversal of the one dimensional track it is confined to. Simulations suggest the that the network present on time step $2^{n+3}+n$ (where $n \geq 0$) can be obtained by taking the network present at time step $2^{n+2}+n-1$ and then simultaneously replacing each vertex with a triangle that has no red or green edges interlinking its neighbors and is not part of the external face (see Figure \ref{repbounce}).

\begin{figure}
\centerline{\includegraphics[scale=0.65]{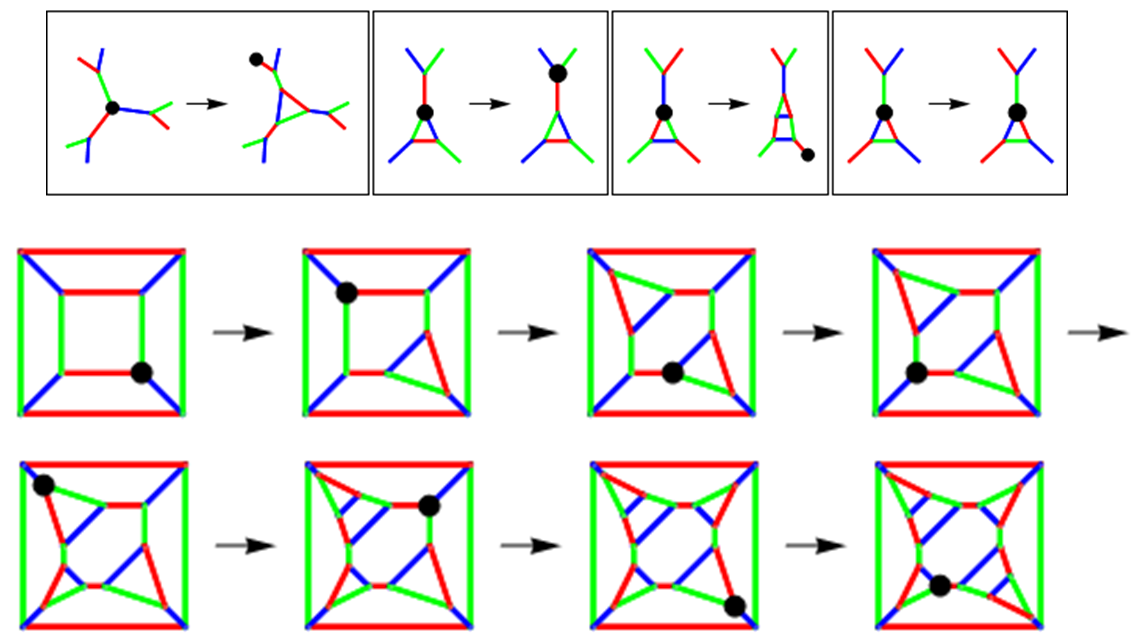}}
\vspace*{8pt}
\caption{A rule with bouncing writer movement evolving for eight updates. The writer travels around the red and green edges of the track that starts out as the central face of the cube. The writer keeps bouncing off the red edge at the bottom of this face and reversing its direction of movement.}
\label{bounce}
\end{figure}



\begin{figure}
\centerline{\includegraphics[scale=0.65]{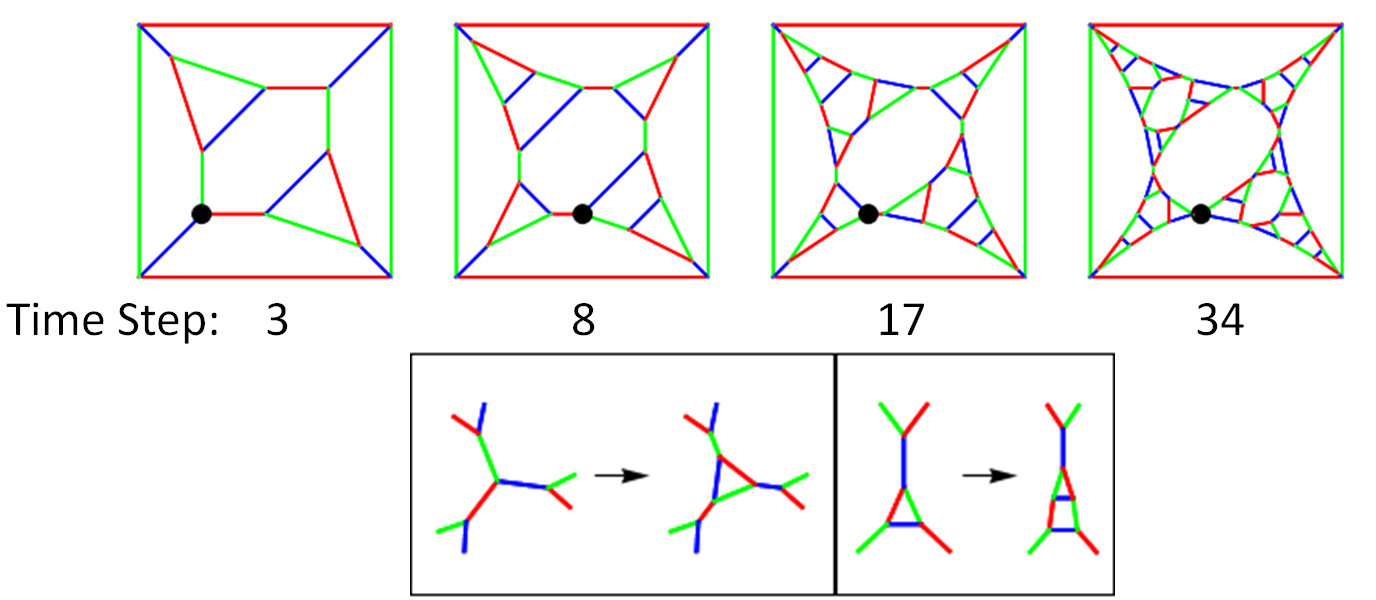}}
\vspace*{8pt}
\caption{The network generated by the system with rules shown in Figure \ref{bounce} on the first few time steps of the form $2^{n+2}-n-1: n \geq 0$. Between times of this form the writer makes a complete traversal of its linear path and both rewrite operations shown at the bottom, globally.}
\label{repbounce}
\end{figure}

Not all of the rules with elaborate growth can be reduced to one dimensional rewrite systems as directly as the examples we have considered, although the effectively one dimensional nature of the writer's movement can be revealed by plotting the trail followed by writer over several time steps (as in the left of Figure \ref{complexcyc}).

\section{More general rules}\label{other}

The behaviour of colored trinets with more general rules does not always fall into the classes listed above. The set of $3888$ rules we enumerated, and discussed above, did not allow the writer to take any action when the edge between its neighbors is green. If we remove this restriction we can find rules with four \emph{active parts} such as the one shown in Figure \ref{fouractive}. In this rule the writer continues to move in a random looking way (that is not ``one dimensional'' as in the rules with elaborate growth discussed above) for at least the first $100000$ time steps. It is an open question whether this rule eventually settles into repetitive growth.

\begin{figure}
\centerline{\includegraphics[scale=0.75]{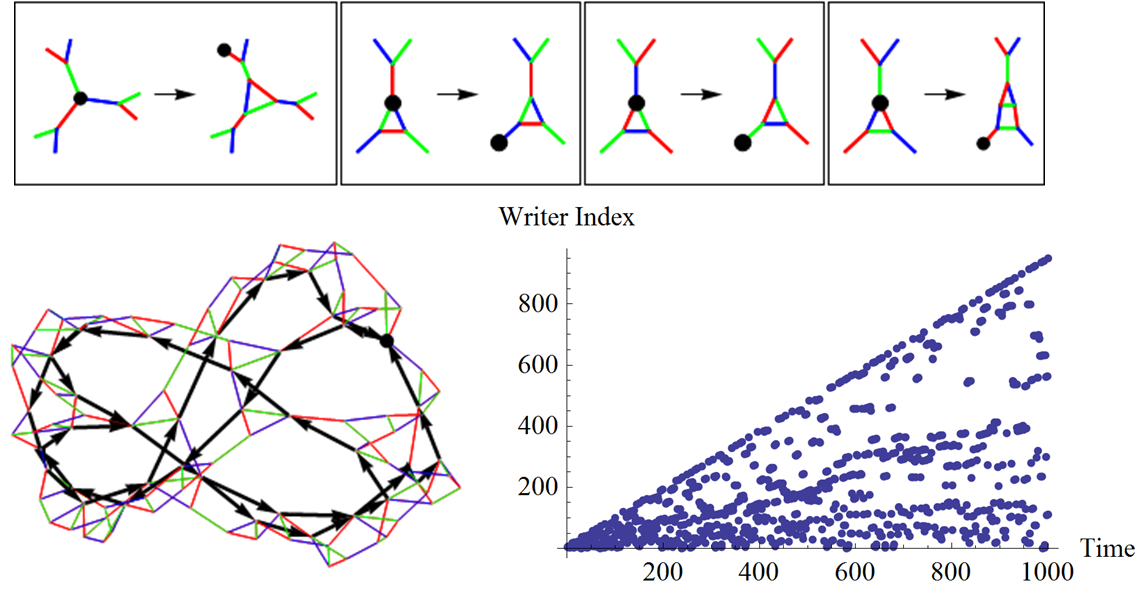}}
\vspace*{8pt}
\caption{A rule with four active parts that has complex behaviour. On the left we show the structure obtained by evolving this rule for $100$ time steps. The black arrows track the positions of the writer over the previous $45$ time steps. On the right we plot the index of the writer over the first $1000$ time steps.}
\label{fouractive}
\end{figure}

We can also consider rules which include more general kinds of rewrite operations, such as the one shown in Figure \ref{sublinear}. In this system triangles can be replaced with vertices. Unlike all of the systems in our previously considered rule set (which either has approximately linear growth rates or reached fixed points) the number of vertices in this case grows sub-linearly over time.

\begin{figure}
\centerline{\includegraphics[scale=0.45]{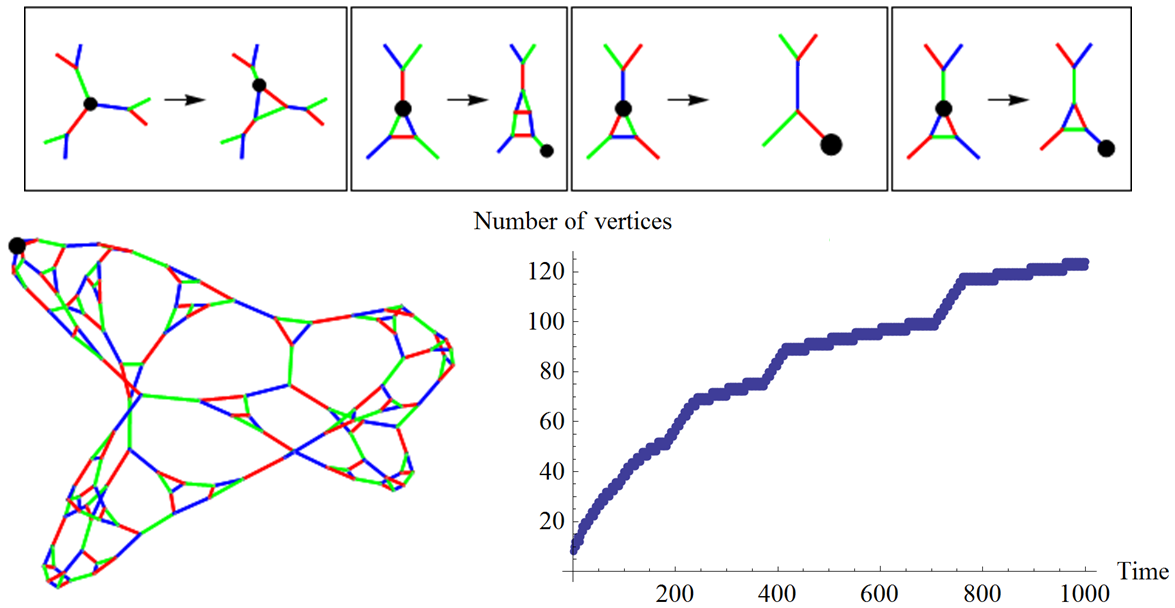}}
\vspace*{8pt}
\caption{A system with more general rules that may involve replacing a triangle with a vertex. On the left we show the network present on time step $1000$. On the right we plot the number of vertices over time.}
\label{sublinear}
\end{figure}

When exploring these more general kinds of rules, one finds many cases with sub-linear growth rates that exhibit repetitive or elaborate growth patterns. In many cases with sub-linear growth the number of vertices grows like the square root of the number of time steps elapsed.
Other rules involving triangle shrinkage can lead to networks with shapes that oscillate over time. A trivial example is shown in Figure \ref{shrink} (although rules exist with much higher period oscillations).

\begin{figure}
\centerline{\includegraphics[scale=0.35]{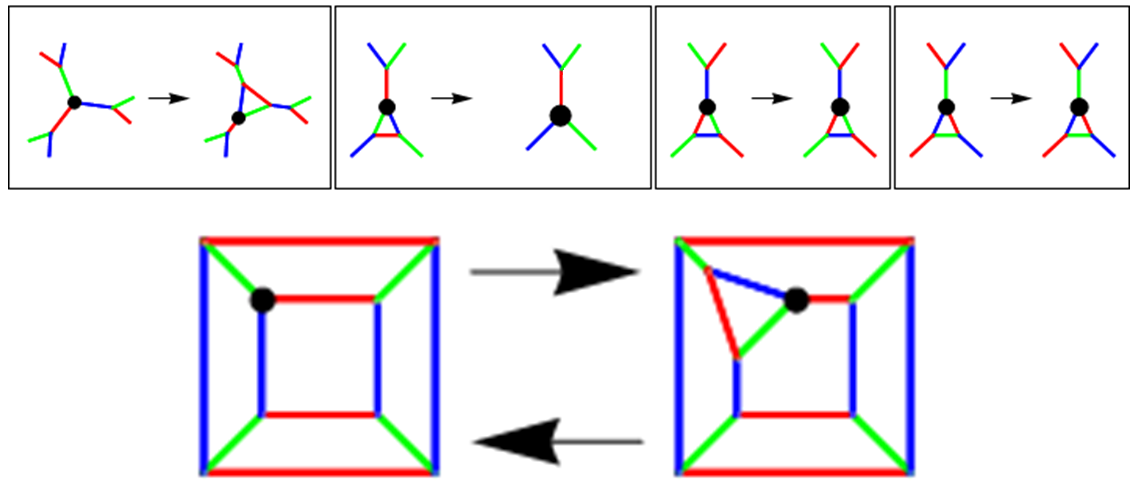}}
\vspace*{8pt}
\caption{In this system the network shape changes periodically.}
\label{shrink}
\end{figure}

\begin{figure}
\centerline{\includegraphics[scale=0.45]{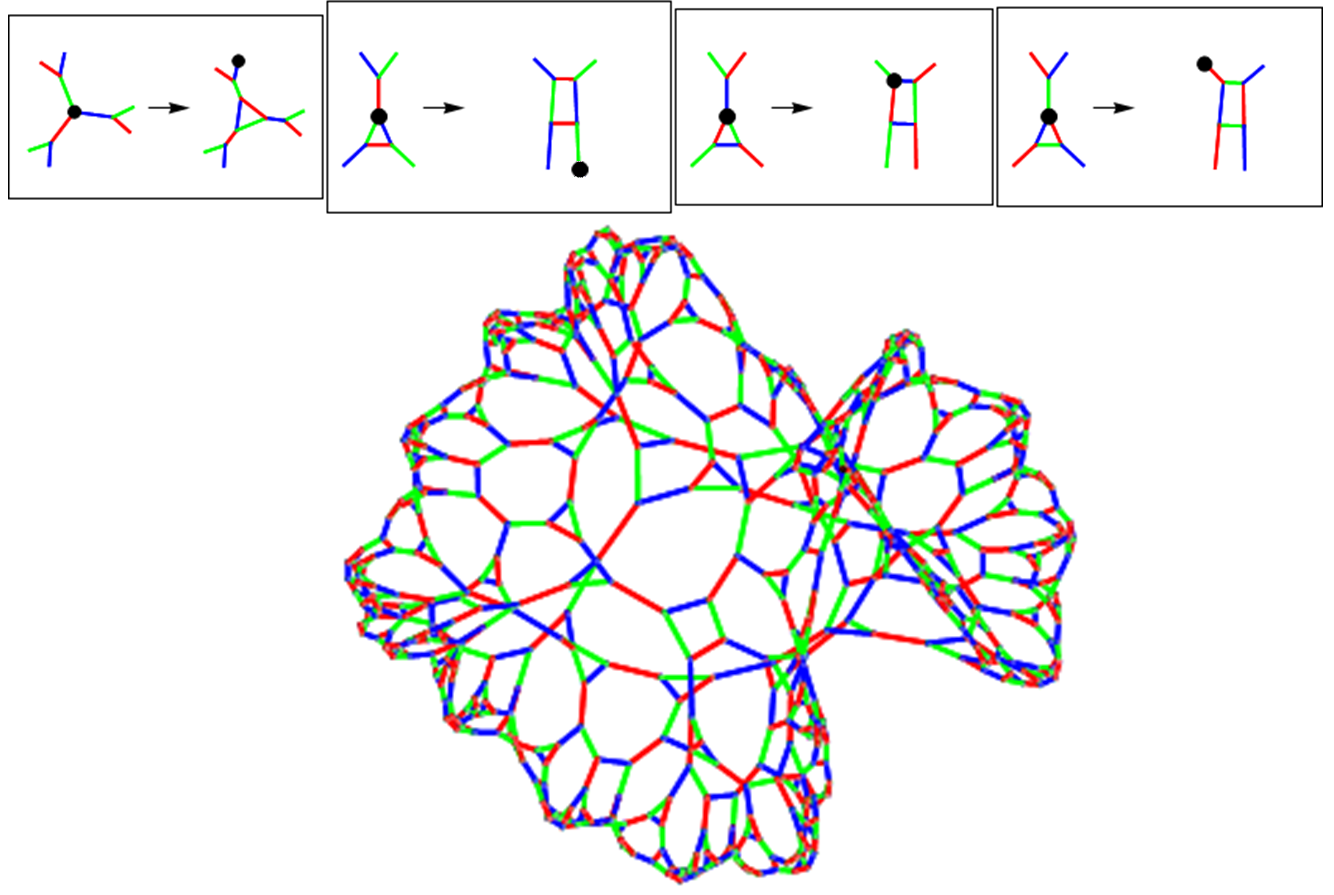}}
\vspace*{8pt}
\caption{A system which includes the \emph{exchange} rule, which effectively replaces a triangle with a square. We show the rule at the top. The structure is generated by running this system for $500$ time steps.}
\label{hyp}
\end{figure}

We can also consider rules including the so called \emph{exchange} operation, which rewires an edge \cite{jap}. In Figure \ref{hyp} we show a rule which uses this type of exchange operation. The rule depicted in Figure \ref{hyp} has some interesting spatial properties. For a vertex $v$, the number of vertices a distance $k$ from $v$ grows exponentially with $k$. This suggests the rule generates hyperbolic (i.e., negatively curved) space. Indeed, this can be verified using the techniques described in \cite{grom} and \cite{curve}. According to these works, a connected network $G$ with vertex set $V(G)$ is \emph{scaled Gromov hyperbolic} (which means it is negatively curved on a large scale) when $\frac{\delta(G)}{\dia (G)} < \frac{3}{2}$. Here
$$\delta (G) = \max\{\min\{d(x,y)+d(y,z)+d(z,x): x \in [b,c], y\in [a,c], z \in [a,b] \}, a,b,c \in V(G) \}.$$
Where $[i,j]$ denotes the shortest path from $i$ to $j$, and $d(i,j)$ denotes the distance from $i$ to $j$. Also $\dia (G) = \max\{d(u,v): u, v \in V(G)\}$ is the diameter of $G$. In our case, the network $G$ obtained by evolving the rule shown in Figure \ref{hyp} for $100$ time steps has $\frac{\delta(G)}{\dia (G)} =1.28571$, and so it is indeed scaled Gromov hyperbolic.

The $3888$ rules in the set that we initially focused on do not generate negatively curved space. This can be seen by noting the each of these rules generates planar networks, and planar networks have non-negative combinatorial curvature\footnote{Combinatorial curvature is a local measure, but the fact none of these networks have negative curvature on the large scale was verified by checking that each network which grows arbitrarily large has a well defined node shell dimension \cite{cause} less than $3$.} (see the Gauss-Bonnet formula \cite{bonnet}). These rules preserve planarity because the only structural modification they allow is the replacement of a vertex with a triangle (and this can be thought of as chopping the corner off a polyhedron -a planarity preserving operation). On the other hand, the exchange operations within the rule depicted in Figure \ref{hyp}, allow it to generate a non-planar network, when initiated from the cube.





\begin{figure}
\centerline{\includegraphics[scale=0.35]{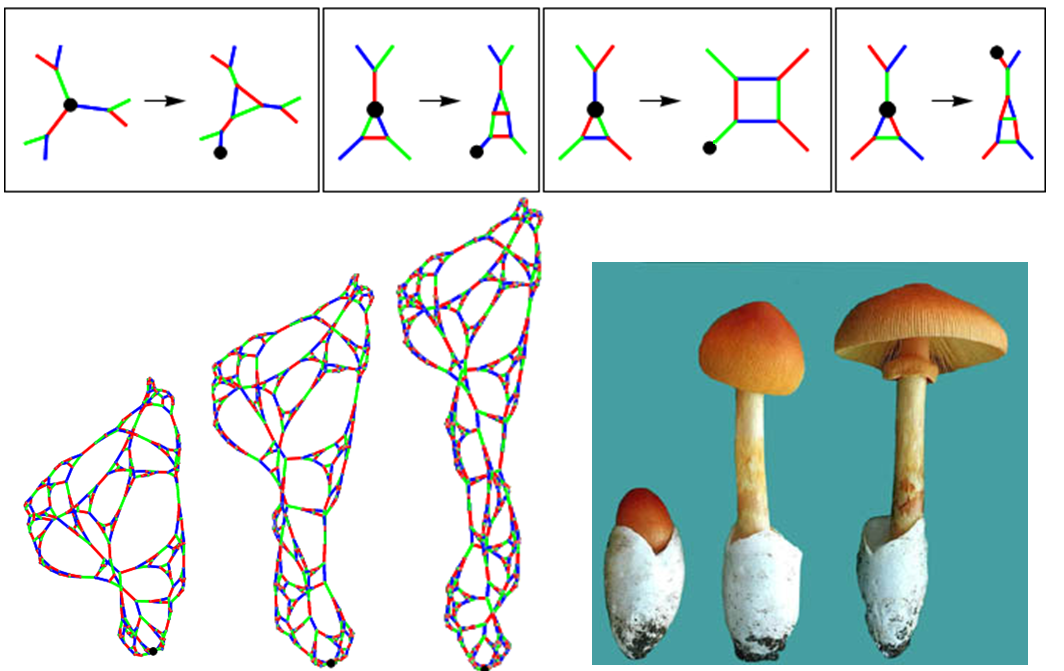}}
\vspace*{8pt}
\caption{On the left we show the networks generated by evolving this system for $150$, $200$ and $250$ time steps, evolving from the cube. On the right we show a photo depicting the development stages of a mushroom, taken from \cite{mush}. The cap is the first part of the mushroom to develop, and then a long stem grows our from the spore at the bottom. Our system generates a ball like network initially, and then grows out a linear structure. These pictures could be made to look more alike by manipulating the network embedding. The networks are drawn using Mathematica's default \emph{Spring Electrical Embedding}}
\label{mushroom}
\end{figure}

\subsection{The resemblance with real world objects}

It is remarkable how varied the forms produced by trinet automata with very similar rules can be. What is even more remarkable is that pictures similar to a diverse range of life forms and objects can easily be produced. In Figures \ref{turtle}, \ref{caterpiller}, \ref{seashell},\ref{brain} and \ref{necklace} we show networks (produced by evolving colored trinet automata, from the cube) next to real world objects.  In many cases there is significant similarity.
When looking at these pictures it is important to keep in mind that a trinet may be represented by a wide variety of pictures, because the only relevant information is the connectivity. In other words, one could draw the colored trinets in a different way, and their resemblance to these objects would disappear. Never the less, each of these trinet pictures where produced using Mathematica's standard, \emph{spring embedding} or \emph{spring electrical embedding} graph drawing algorithm. This means the way the networks are drawn was chosen by the plotting algorithm to reflect the topology of the network. In this regard, one can argue that there is similarity between the real objects shown and the colored trinets pictured because the most natural ways to draw the networks yield pictures that look like the objects. Figures \ref{turtle}, \ref{caterpiller}, \ref{seashell},\ref{brain}, and \ref{necklace} also show topological similarities between the colored trinets and the real objects.

\begin{figure}
\centerline{\includegraphics[scale=0.35]{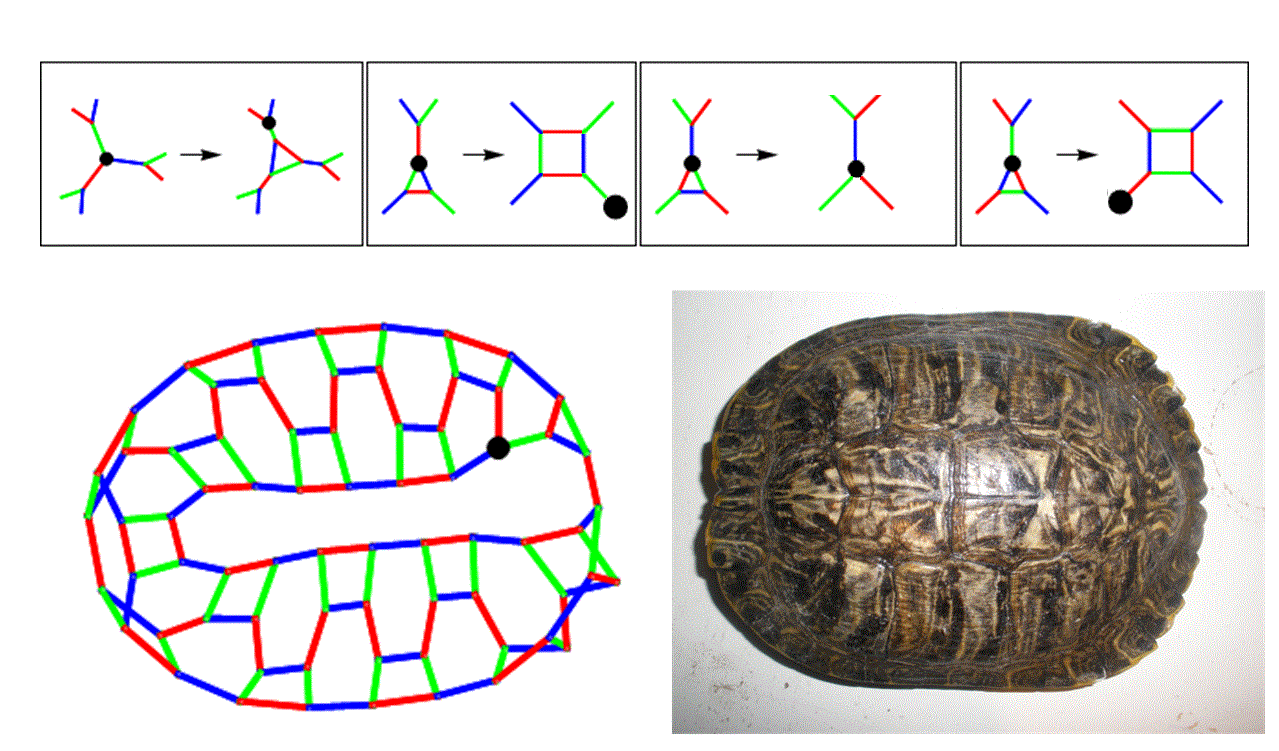}}
\vspace*{8pt}
\caption{On the left we show the network obtained from evolving our rule for $46$ time steps, starting from the cube. The network drawn using mathematica's standard \emph{spring embedding} algorithm. On the right we show a photo of a turtle shell taken from \cite{turtle}.}
\label{turtle}
\end{figure}

\begin{figure}
\centerline{\includegraphics[scale=0.35]{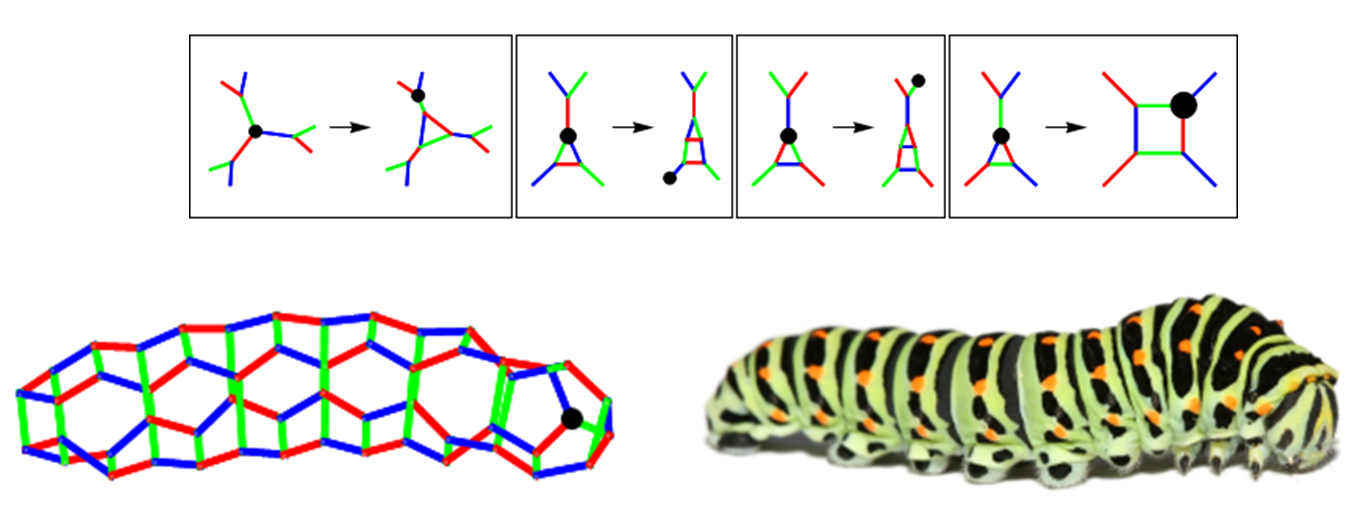}}
\vspace*{8pt}
\caption{On the left we show the network obtained from evolving our rule for $34$ time steps, starting from the cube. The network drawn using mathematica's standard \emph{spring embedding} algorithm. On the right we show a photo of a caterpillar taken from \cite{caterpiller}.}
\label{caterpiller}
\end{figure}

\begin{figure}
\centerline{\includegraphics[scale=0.35]{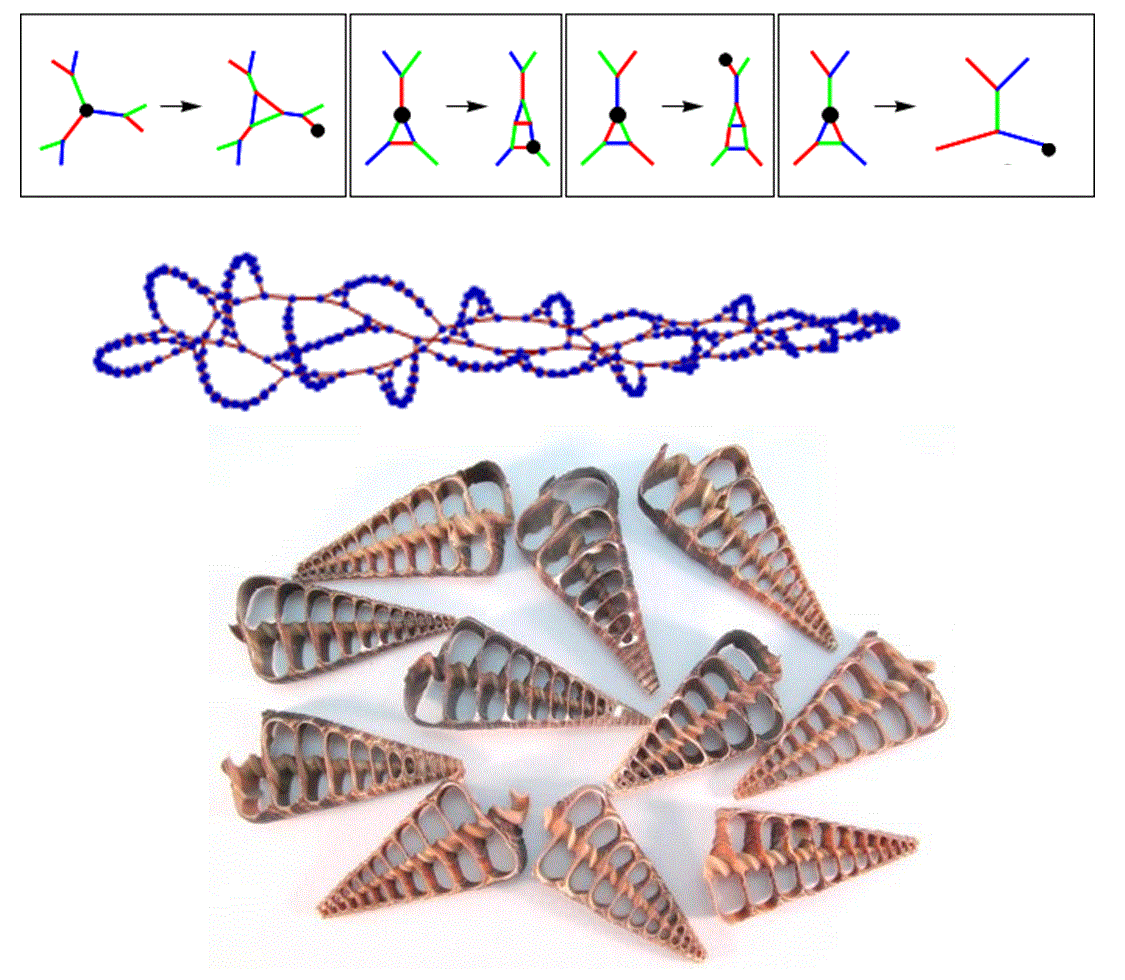}}
\vspace*{8pt}
\caption{At the top we show the rule. In the middle we show the network obtained from evolving our rule for $1600$ time steps, starting from the cube. The network drawn using mathematica's standard \emph{spring electrical embedding} algorithm (we have not shown edge colors). At the bottom we show a photo of sliced Telescopium seashells taken from \cite{seashell}.}
\label{seashell}
\end{figure}

\begin{figure}
\centerline{\includegraphics[scale=0.35]{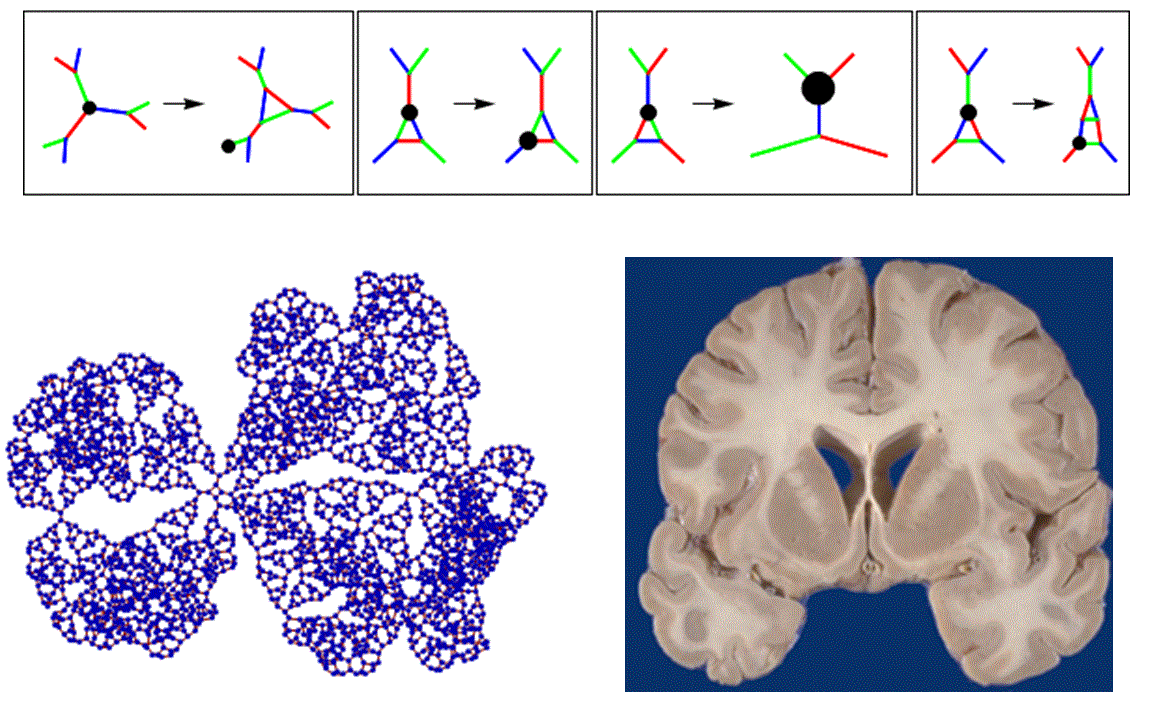}}
\vspace*{8pt}
\caption{On the left we show the network obtained from evolving our rule for $1430$ time steps, starting from the cube. The network drawn using mathematica's standard \emph{spring embedding} algorithm (we have not shown edge colors). On the right we show a photo of a coronal slice through a human brain taken from \cite{brain}.}
\label{brain}
\end{figure}

\begin{figure}
\centerline{\includegraphics[scale=0.35]{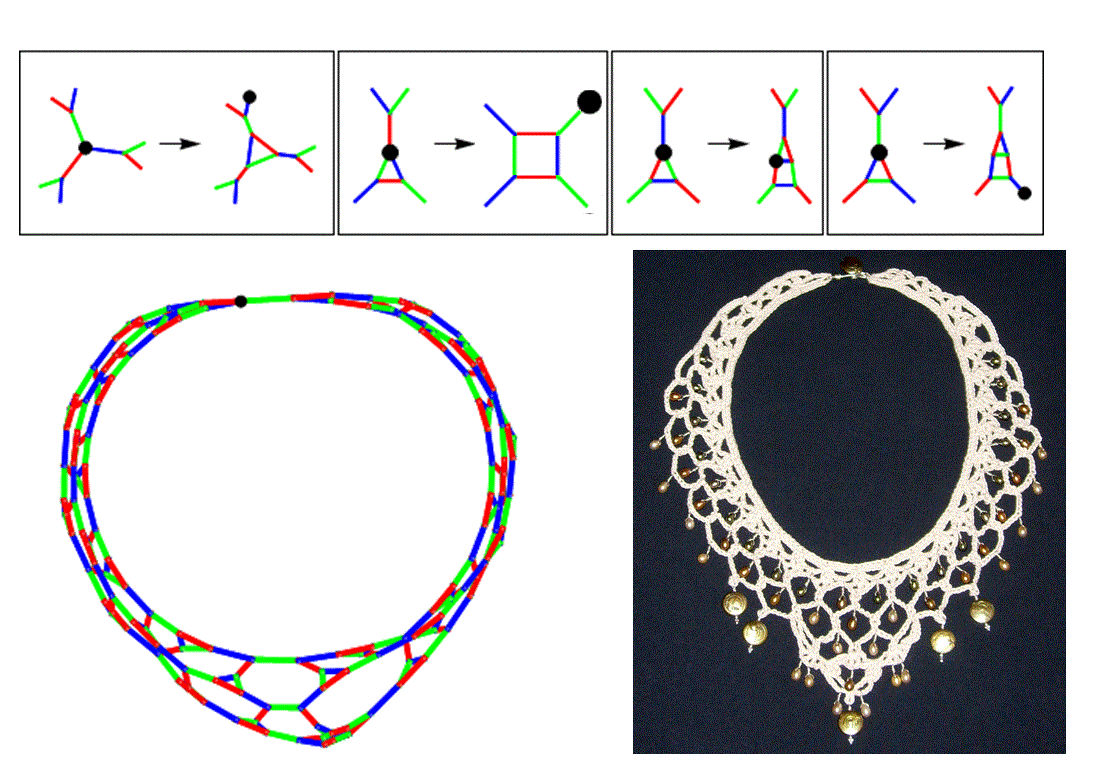}}
\vspace*{8pt}
\caption{On the left we show the network obtained from evolving our rule for $100$ time steps, starting from the cube. The network drawn using mathematica's standard \emph{spring electrical embedding} algorithm. On the right we show a photo of a Necklace made from crochet lace, pearls, and sterling silver taken from \cite{necklace}.}
\label{necklace}
\end{figure}

\section{Conclusion}

We have explored many aspects of the behavior of colored trinet automata. We found that the systems in our initial rule space could be sorted into three classes - fixed, repetitive growth and elaborate growth. We have shown that each class can be described in terms of writer movement.
When we look at the dynamics of more general rules we find other types of behavior. In particular, we find rules which generate sub-linear growth, persistent complex behaviour, periodically changing networks and hyperbolic space. We have also noted that many trinets with more general rules can produce forms remarkably similar to a diverse range of real world objects.

As minimal models capable of growing complex structure, these systems should have some applications. The systems could be realized by creating a robot which pulls itself along chains, ropes or silk and drops lines (like a spider) to triangulate vertices. This type of realization could perhaps be useful for weaving or construction work. The ability of these systems to produce exotic network structures with relative ease could make them useful in network design. In existential graph theory there are many open questions as to whether large trinets exist with particular properties and it will be interesting to see whether our systems can resolve any of these issues.

It will also be interesting to see if these systems have any applications in modeling. The way structures grow under repetitive growth is reminiscent of the way plants grow (see Figure \ref{mushroom}). Many of the networks produced by rules with elaborate growth rules seem reminiscent of polygonal networks in cracked earth or foams. The general idea of having a single writer that rewires a complex network is reminiscent of brains and databases where the connectivity is altered to store information.

There many directions that this work can be taken in the future. We have explored the dynamics of many colored trinet automata and identified many kinds of behavior, however a more general classification scheme is evidently required to deal with the systems we described in section \ref{other}. Exploring more general rules will be exciting, and will surely yield systems with other interesting kinds of behavior and applications. There are also many unanswered questions about the systems presented here, for example: is there a simple formula for the growth rate of system shown in Figure \ref{complexcyc}? Does the complex behavior in Figure \ref{fouractive} persist forever ? How can rules be characterized according to the structural properties of the networks they produce ?


\section*{Acknowledgments}
This work is supported by the General Research Funds (Project Number 412509) established under the University Grant Committee of the Hong Kong Special Administrative Region, China.

\appendix
\section{Appendix}
A word $w$ with alphabet $\mathcal{A}$ is a sequence $w_0...w_{m-1}$ of $m \geq 0$ elements $w_i \in \mathcal{A}$ from $w$. The length $|w|$ of such a word is the number elements in it ($m$ in this case). The empty word is a word of length $0$. Notice that we index the elements in our words $w$ from $0$ to $|w|-1$.
Let $\mathcal{A}^*$ denote the set of all words with alphabet $\mathcal{A}$. Let
\begin{equation}wx=w_0...w_{|w|-1}x_0...x_{|x|-1}\end{equation}
denote the word formed by concatenating words $w$ and $x$. If $k >0$ then $w^k$ denotes the word obtained by concatenating $k$ copies of $w$ together (e.g. $w^3 = www$). Also $w^0$ is the empty word.

We let $r, b $ and $g$ denote the colors red, blue and green, respectively. We define the \emph{type} $\rho(v)_G$ of a vertex $v$ in a colored trinet $G$ such that, if none of $v$'s neighbours are interlinked then $\rho(v)_G =0$ and if exactly one pair of $v$'s neighbors are interlinked, by an edge of color $c \in \{r,b,g \}$ then $\rho (v)_G =c$.

We let $(X^t,G^t)$ denote the state of a system at time step $t$ (i.e., after $t$ updates, staring from the cube). Here $X^t$ is the writer vertex on time step $t$ and $G^t$ is the colored trinet on time step $t$. In each of the systems we discuss, we start with a cube as our initial network, $G^0$ and each structural modification made to the network during an update involves replacing a single vertex with a triangle. It follows that no vertex will have more than one pair of interlinked neighbors, in any network that occurs during the evolution of one of our systems. It follows that the type, $\rho(v)$ of each vertex is well defined. In our systems the current type of the writer vertex determines how the writer will move and whether or not it will replace its old location with a triangle.


The following notation will help us to describe motion. For a word $w \in \{r,b,g\}^*$ we let $v[w]$ denote the vertex one arrives at by starting at $v$ and following the sequence of edges whose colors are described by $w$.
More precisely, for a color $c \in \{r,b,g \}$ and vertex $v$ in a colored trinet $G$ let $v[c]$ denote the vertex linked to $v$ by an edge of color $c$. If $w$ is the empty word, we let $v[w]=v$. For a word of the form $xc$ (where $x \in \{r,b,g\}^*$ and $c \in \{r,b,g \}$) we let $v[xc] = v[x][c]$.
\subsection{Proof of Theorem \ref{cycsimptheorem}}
Let $(X^t, G^t)$ denote the state of the system on time step $t$, evolving under the rules shown in Figure \ref{cycsimp} (starting with $G^0$ as the cube).
Let $U(t)$ denote the sequence $U(t)_0...U(t)_{|U(t)|-1}$ of vertices in $G^t$ that one reaches by starting at the writer vertex $X^t$ and moving along edges with color alternating between red and blue. More precisely, $\forall k \in \{0,1,...,|U(t)|-1\}$ we have
 \begin{eqnarray}
U(t)_k = \left\{\begin{array}{ll}
X^t[(rb)^{k/2}] & \text{if } $k$ \text{ is even},\\
X^t[r(br)^{(k-1)/2}] & \text{if } $k$ \text{ is odd.}
\end{array}\right. ~
\end{eqnarray}
The length of this sequence, $|U(t)| =2. \min\{k>0:X^t[(rb)^k]=X^t \}$, is the number of vertices in the cycle of edges with colors alternating between red and blue, that contains the writer. Let
\begin{equation}J(t) = J(t)_0...J(t)_{|J(t)|-1} = \rho(U(t)_0)_{G^t}...\rho(U(t)_{|U(t)|-1})_{G^t}\end{equation}
be the sequence of types of vertices in $U(t)$. In other words, $|J(t)|=|U(t)|$ and $J(t)_k = \rho(U(t)_k)_{G^t}$, $\forall k \in \{0,1,...,|J(t)|-1 \}$.
\begin{lemma}\label{1de}

$\forall T \geq 1$, we have $J(2T)=(bgr0)^{T+1}$ and $J(2T+1)=00(bgr0)^{T+1}$.
\end{lemma}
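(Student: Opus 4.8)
The plan is to prove both identities simultaneously by induction on $T$, tracking how the alternating red/blue cycle $U(t)$ and its type sequence $J(t)$ change under a single update of the system. The base case $T=1$ can be checked by direct computation: starting from the cube and applying the rule four times (cf. Figure \ref{cycsimp} and Figure \ref{cycsimp1dN}), one reads off $U(2)$ and $U(3)$ explicitly and verifies $J(2)=(bgr0)^{2}$ and $J(3)=00(bgr0)^{2}$. In fact it is cleaner to carry two linked claims through the induction: the ``even'' claim $J(2T)=(bgr0)^{T+1}$ and the ``odd'' claim $J(2T+1)=00(bgr0)^{T+1}$, passing from $2T$ to $2T+1$ by one update, then from $2T+1$ to $2T+2$ by one more update, and showing the latter reproduces the even claim with $T$ incremented.

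The main work is a \emph{local update lemma}: describe exactly how one application of the rule changes $(X^t,G^t)$, and hence $U(t)\mapsto U(t+1)$ and $J(t)\mapsto J(t+1)$. By the rule in Figure \ref{cycsimp}, whenever $\rho(X^t)\in\{0\}$ (equivalently when the leading type $J(t)_0=0$), the writer replaces its current vertex by a triangle and then moves along a red edge and then a blue edge, i.e.\ $X^{t+1}=X^t[rb]$ after the rewrite. When $\rho(X^t)=b$ (i.e.\ $J(t)_0=b$), the writer moves along a red edge with no rewrite. One must check, using the definition of $v[w]$ and the type function $\rho(\cdot)_G$, that: (i) inserting the triangle at $X^t$ splices a new length-$2$ arc into the alternating red/blue cycle so that $|U(t+1)|=|U(t)|+2$, with the two fresh vertices having types that turn out to be $0$ and $0$ in the relevant configuration; and (ii) the types of all the \emph{old} vertices of $U(t)$ are unchanged except possibly for those adjacent to the rewritten vertex, whose interlinking edge gets created or shifts color in a controlled way. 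Concretely one wants to show the update acts on the type word essentially as: delete the leading $0$, prepend the block accounting for the new triangle, and cyclically rotate — so that $(bgr0)^{T+1}\mapsto 00(bgr0)^{T+1}$ on the step $2T\to 2T+1$, and then $00(bgr0)^{T+1}\mapsto (bgr0)^{T+2}$ on the step $2T+1\to 2T+2$. I would organize this as: (a) identify which entries of $J(t)$ are ``near'' the writer and could change; (b) compute the new local picture after chopping the corner; (c) read off the new head of $J(t+1)$ and confirm the tail is an unchanged cyclic shift of the old word.

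The hard part will be bookkeeping the indices and the cyclic structure carefully — in particular verifying that the writer, after moving two edges clockwise, sits at the vertex that makes $J(t+1)$ begin with exactly the claimed prefix ($00$ in the odd case, nothing extra in the even case), and that no ``unexpected'' interlinking edge (especially a green or red one that would halt the writer) is created among the old vertices. One should also confirm the consistency condition $|U(2T)|=4(T+1)$ and $|U(2T+1)|=4(T+1)+2$ implied by the claimed type words, since these lengths must match $2\min\{k>0:X^t[(rb)^k]=X^t\}$; this is where an off-by-one error is most likely to hide. Once the local update lemma is established, the induction itself is a two-line formal manipulation of the words $(bgr0)^{T+1}$ and $00(bgr0)^{T+1}$, so essentially all the content is in the careful case analysis of a single rewrite step, best done with reference to the one-dimensional picture in Figure \ref{cycsimp1d}.
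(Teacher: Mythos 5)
Your overall strategy --- induction on $T$, passing $2T \to 2T+1 \to 2T+2$ by analysing how a single update acts on the cyclic type word $J(t)$ (a rotation for the writer's movement plus a local substitution for the rewrite), with the base case checked by hand --- is exactly the route the paper takes, and your checklist (length bookkeeping, identifying which old entries of $J(t)$ can change type) is the right one. However, your ``local update lemma'' misstates the rule of Figure \ref{cycsimp} in the type-$b$ case: you assert that when $\rho(X^t)_{G^t}=b$ the writer ``moves along a red edge with no rewrite.'' That is the rule of Figure \ref{firstfixed}, not this one. For the rule of Figure \ref{cycsimp} the action is identical in the type-$0$ and type-$b$ cases: the current vertex is replaced by a triangle and the writer moves along a red edge and then a blue edge (this is precisely Corollary \ref{coro1}). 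Your version is internally inconsistent with the word transformation you then claim for the step $2T\to 2T+1$: $J(2T)=(bgr0)^{T+1}$ has leading symbol $b$, and the target $00(bgr0)^{T+1}$ is two symbols longer and rotated by two positions, neither of which can occur if the writer merely steps along one red edge without rewriting. As written, your induction fails on every even time step.

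A second, related error: the three vertices of a freshly inserted triangle do not have type $0$; each lies on the alternating red/blue cycle and sees the opposite edge of its own triangle, so the replaced symbol becomes the block $bgr$ (which is what accounts for the length increase of $2$). The $00$ prefix of $J(2T+1)$ comes from a different mechanism, and identifying it is the crux of the inductive step: on the update out of time $2T$ the writer's vertex has type $b$, i.e.\ it already belongs to a triangle; chopping that corner off turns the pre-existing triangle into a square, so its two surviving vertices change type from $r$ and $g$ to $0$, producing the two leading $0$'s. You do gesture at this possibility (``types of old vertices adjacent to the rewritten vertex \ldots shift in a controlled way''), but you attribute the $00$ to the new vertices instead, so the case analysis you propose would not reproduce the claimed words. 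With the rule corrected and this triangle-to-square effect made explicit, the rest of your plan (rotation by two, substitution of one symbol by $bgr$, length check $|U(2T)|=4(T+1)$, $|U(2T+1)|=4(T+1)+2$) goes through exactly as in the paper.
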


\begin{proof}
For $T=1$ we have $J(2) = bgr0bgr0$ and $J(3) = 00bgr0bgr0$, and so the result holds in this case.

Suppose the result holds for some $T \geq 0$.

Now, $J(2T+1)=00(bgr0)^{T+1}$, by hypothesis. This means the writer has type $\rho(X^{2T+1})_{G^{2T+1}}=J(2T+1)_0 = 0$ on time step $2T+1$. Now, by consulting the rules behind the system one can see that updating, from this state, will consist of the writer moving along a red edge, and then a blue edge (to the position of vertex $U(2T+1)_2$) and then replacing its previous location with a triangle. This implies that the type sequence $J(2T+2)$, for the next time step, can be constructed by taking $J(2T+1)$ and rotating it two steps to the left (to represent to writer's movement) to get $(bgr0)^{T+1}00$, and then replacing the writers previous position (i.e., vertex corresponding to the penultimate element in $(bgr0)^{T+1}00$) with a triangle ($bgr$) to get $J(2T+2) = (bgr0)^{T+1}bgr0 = (bgr0)^{T+2}$.

On time step $2T+2 = 2(T+1)$ the writer has type $\rho(X^{2(T+1)})_{G^{2(T+1)}} = J(2(T+1))_0 = b$. Now, by consulting the rules behind the system one can see that updating, from this state, will consist of the writer moving along a red edge, and then a blue edge (to the position of vertex $U(2(T+1))_2$) and then replacing its previous location with a triangle. This implies that the type sequence $J(2(T+1)+1)$, for the next time step, can be constructed by taking $J(2(T+1))$ and rotating it two steps to the left (to represent to writer's movement) to get $r0(bgr0)^{T+1}bg$, and then replacing the writers previous position (i.e., vertex corresponding to the penultimate element in $r0(bgr0)^{T+1}bg$) with a triangle ($bgr$). It follows that the type sequence on the next time step will be $J(2(T+1)+1) = 00(bgr0)^{T+1}bgr0 = 00(bgr0)^{T+2}$. The reason for this is that, immediately before the update, the writer's previous position (which corresponds to the penultimate element in the word $r0(bgr0)^{T+1}bg$) is part of a pre-existent triangle (together with the vertices corresponding to the last and first elements in the word $r0(bgr0)^{T+1}bg$), and when the writers previous position is replaced with a triangle this pre-existent triangle becomes a square (thus effecting the types of the other vertices which used to be part of it). The effect of this is that we replace the rightmost $b$ in our word $r0(bgr0)^{T+1}bg$ with $bgr$ (to represent the replacement of the writers previous position with a triangle) and we replace the rightmost $g$ and the leftmost $r$, in our word, with $0$ to represent how the types of these vertices change when the triangle they used to be part of is expanded apart by the addition of the new triangle. These modifications lead to the word $J(2(T+1)+1) = 00(bgr0)^{T+1}bgr0$.

Now we have shown that our result holds for $T=1$, and if our result holds for any $T \geq 1$ then it holds for $T+1$. We can hence prove the result holds $\forall T \geq 1$ using induction with $T$.

\end{proof}

\begin{corollary}\label{coro1}
$\forall t \geq 0$, the update at time step $t$ consists of the writer moving along a red edge and then a blue edge, and replacing its previous position with a triangle.
\end{corollary}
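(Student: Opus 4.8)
The plan is to read Corollary \ref{coro1} straight off Lemma \ref{1de}, after checking the first two time steps by hand. The starting point is that the writer vertex at time $t$ is the head of the sequence $U(t)$, since $U(t)_0 = X^t[(rb)^0] = X^t$; hence the writer's type at time $t$ is $\rho(X^t)_{G^t} = J(t)_0$. Inspecting the rule of Figure \ref{cycsimp}, the writer performs exactly the action asserted in the corollary --- a step along a red edge, then a step along a blue edge, then replacement of its old vertex by a triangle --- in precisely the two cases where its type is $0$ or $b$; in the other two cases (type $r$ or type $g$) the rule prescribes no action. So the corollary is equivalent to the statement that $J(t)_0 \in \{0,b\}$ for every $t \geq 0$. (This is the same movement fact that is already invoked repeatedly inside the proof of Lemma \ref{1de}; the corollary just packages it and extends it to the two initial updates.)

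For $t \geq 2$ the claim is immediate from Lemma \ref{1de}: writing $t = 2T$ or $t = 2T+1$ with $T \geq 1$, the lemma gives $J(2T) = (bgr0)^{T+1}$, hence $J(2T)_0 = b$, and $J(2T+1) = 00(bgr0)^{T+1}$, hence $J(2T+1)_0 = 0$. It remains to handle $t=0$ and $t=1$. At $t=0$ the network $G^0$ is the cube, in which no vertex has a pair of interlinked neighbours, so $\rho(X^0)_{G^0}=0$, and the first update is the triangle-replacement / red-then-blue move. At $t=1$ one carries out this first update explicitly: $G^1$ is the cube with one corner expanded into a triangle, $X^1 = X^0[rb]$ sits at graph-distance $2$ from that triangle, and a direct inspection of its neighbourhood in $G^1$ shows $\rho(X^1)_{G^1}=0$ (consistent with the value $J(1)=00bgr0$ forced by the induction set-up in the proof of Lemma \ref{1de}). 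This exhausts all $t \geq 0$.

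There is no genuinely difficult step: the corollary is a bookkeeping consequence of Lemma \ref{1de}. The only point needing a little care is to confirm that \emph{both} admissible writer types, $0$ and $b$, trigger the same movement-and-rewrite behaviour under the rule of Figure \ref{cycsimp}, so that the type sequence $0,0,b,0,b,0,b,\dots$ produced for the writer (via Lemma \ref{1de} for $t\ge 2$ and the explicit checks at $t=0,1$) really does yield one uniform description of every update; once that is noted, reading off $J(t)_0$ completes the proof.
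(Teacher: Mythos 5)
Your proposal is correct and follows essentially the same route as the paper: direct verification for $t=0$ and $t=1$ (which the paper simply declares ``clearly true''), followed by reading off $J(2T)_0=b$ and $J(2T+1)_0=0$ from Lemma \ref{1de} and checking that both types trigger the same red-then-blue move with triangle replacement under the rule of Figure \ref{cycsimp}. The only difference is that you spell out the $t=0,1$ base cases and the reduction of the corollary to the claim $J(t)_0\in\{0,b\}$ more explicitly than the paper does.
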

\begin{proof}
For $t=0$ and $t=1$ this result is clearly true. Now suppose $t \geq 2$.

If $t$ is even then $\exists T \geq 1$ such that $t = 2T$, and so Lemma \ref{1de} implies that $J(t) = J(2T) = (bgr0)^{T+1}$, and so $\rho(X^t)_{G^t} = J(2T)_0 = b$, and so (according to the rules of our system) the update at time step $t$ consists of the writer moving along a red edge and then a blue edge, and replacing its previous position with a triangle.

If $t$ is odd then $\exists T \geq 1$ such that $t = 2T+1$, and so Lemma \ref{1de} implies that $J(t) = J(2T) = 00(bgr0)^{T+1}$, and so $\rho(X^t)_{G^t} = J(2T)_0 = 0$, and so (according to the rules of our system) the update at time step $t$ consists of the writer moving along a red edge and then a blue edge, and replacing its previous position with a triangle.
\end{proof}

We say that $n \geq 2$ satisfies property (1) when $J(2^n -2) = (bgr0)^{2^{n-1}}$, and for each vertex $v$ in $G^{2^n -2}$ of type $\rho(v)_{G^{2^n -2}} \in \{r,b\}$ there is a constant $q(v) \in \{0,1,...,2^n-1\}$ such that $v = X^{2^n -2}[(rb)^{q(v)}] = U(2^n-2)_{2q(v)}$ (in other words, each vertex of type $r$ or $b$ can be reached by repeatedly doing \emph{$rb$ hops} from the writer's current position $X^{2^n -2}$). Here an $rb$ hop consists of moving along a red edge, and then a blue edge.

Suppose that $n \geq 2$ satisfies property (1). We will show that this implies that $n+1$ satisfies property (1), and that $G^{2^{n+1}-2}$ can be obtained by taking $G^{2^{n}-2}$ and simultaneously replacing each vertex of type $r$ or $b$ with a triangle.

Now according to Corollary \ref{coro1} the writer will visit the vertices
$$X^{2^n -2}[(rb)^0], X^{2^n -2}[(rb)^1],...,X^{2^n -2}[(rb)^{2^n-1}]$$
in sequence, over the next $2^n$ updates, and each such vertex will be replaced with a triangle. It follows that the network $G^{2^n-2+2^n}=G^{2^{n+1}-2}$ present $2^n$ time steps later can be obtained by taking $G^{2^n-2}$ and simultaneously replacing each vertex of type $r$ or $b$ with a triangle. Moreover, each vertex $v'$ of $G^{2^{n+1}-2}$ which is of type $r$ or $b$ will be part of a triangle which was created during an update which occurred upon a time step in $\{(2^n-2),(2^n-2)+1,...,(2^n-2)+2^n-1\}$. It follows that each such vertex $v'$ will be reachable by doing some number $q'(v') \geq 0$ of $rb$ hops from the writer's current position $X^{2^{n+1}-2}$. Moreover, Lemma \ref{1de} implies that $J(2^{n+1}-2) = (bgr0)^{2^{n}}$. This implies that each of our vertices $v'$ (with type $r$ or $b$) can be reached by doing a number of $rb$ hops $q'(v')$ such that $q'(v) \leq 2^{n+1}-1$. It follows that $n+1$ satisfies property (1).

Now $n=2$ clearly satisfies property (1). It follows, by induction with $n$ that $\forall n \geq 2$ we have $n$ satisfies property (1) and $G^{2^{n+1}-2}$ can be obtained by taking $G^{2^{n}-2}$ and simultaneously replacing each vertex of type $r$ or $b$ with a triangle. QED.

\subsection{A complete description of the behaviour of the rule shown in Figure \ref{cycsimp}}

Theorem \ref{cycsimptheorem} shows how the system pictured in Figure \ref{cycsimp} can be viewed as evolving via global replacement operations. Under this system we can describe exactly how the network structure and writer position change with time.

Let $\mathbb{N}$ and $\mathbb{N}_0$ denote the sets of positive and non-negative integers respectively.

For any pair of non-negative integers $x,y \in \mathbb{N}_0$ let $\zeta(x,y) = \min\{ x+2y,2x\}$.

We describe the state of a trivalent network automata by listing the vertex set, set of red edges, set of blue edges, set of green edges and the position of the writer, in that order.

Suppose $T \in \mathbb{N}: T \geq 2$. Let $n(T) = \max\{m \in \mathbb{N}: 2^m -2 \leq T \}$ and $t(T) = T-(2^{n(T)} -2)$.

Now let us define the state $H(T) = (V(T), E_r(T), E_b(T), E_g(T), W(T))$ at time $T$ such that:

\begin{enumerate}
\item The set of vertices is
$V(T) = \{-4,-3,..,2^{n(T)+1}+2.t(T) - 1\}.$

\item The set of red edges is
$E_r(T) = \{ \{-1,-4\}, \{ -3, -2\} \} \cup \{ \{i, i+1 \mod 2^{n(T) + 1} + 2.t(T) \} : i \in \{ 1,3,..,2^{n(T) + 1} + 2.t(T) -1\}\}.$

\item The set of blue edges is
$E_b(T) = \{ \{-4,-3\}, \{ -2, -1\} \} \cup \{ \{i, i+1\} : i \in \{ 0,2,..,2^{n(T) + 1} + 2.t(T) -2\}\}.$

\item The set of green edges is
$E_g(T) = \{ \{-4,0\}, \{ -3, \zeta(2^{n(T)-1},t)\}, \{ -2, \zeta(2^{n(T)},t)\} , \{ -1, \zeta(2^{n(T)+1} - 2^{n(T)-1},t)\} \} \cup
\{ \{ \zeta( 2^{\alpha +2}. \beta +2^\alpha , t), \zeta( 2^{\alpha +2}. \beta +2^\alpha +2^{\alpha +1} , t) \} : \alpha \in \{0,1,..,n(T)-2\}, \beta \in \{0,1,..,2^{n(T)- \alpha -1} -1\} \}
\cup
\{ \{4m+1,4m+3\} : m \in \{0,1,..,t(T)-1\}\}.$

\item The writer position is $W(T) = 4.t(T)+1$.

\end{enumerate}

Now we claim, for each $T \geq 2$, that $H(T)$ is equivalent to the state of $S(T)$ of the system obtained by evolving the rule shown in Figure \ref{cycsimp} for $T$ time steps, starting from the cube. More precisely, we claim that there is a graph isomorphism from the network of $H(T)$ to the network of $S(T)$ that preserves the position of the writer and the colors of the edges. Such an isomorphism is a bijection from $V(T)$ to the vertex set of the network in state $S(T)$ which sends the writer position $W(T)$ to the position of the writer in $S(T)$, and which has the property that, a pair of vertices are linked by an edge of a particular color, in $H(T)$, if and only if the images of these vertices under our bijection, are linked by an edge of the same color.

This equivalence of $H(T)$ and $S(T)$ can be seen by induction. It can easily be verified in the case where $T=2$. Suppose the result holds for some generic $T \geq 2$ (i.e., suppose that $H(T)$ is equivalent to $S(T)$). Now we simply have to show that the result also holds for $T+1$, and we can use induction with $T$ to complete the proof. We can show the result holds for $T+1$ by showing that the state $H^*$ obtained by updating $H(T)$ under the system shown in Figure \ref{cycsimp} is isomorphic to $H(T+1)$.

The state $H^*$ obtained by updating $H(T)$ is equal to the state obtained by taking $H(T)$, and moving the writer along a red edge followed by a blue edge, and then performing a structural modification upon the network. This structural modification (which yields the network of $H^*$ from the network of $H(T)$) consists of the deleting the vertex $W(T)$ (and edges\footnote{Here $u_b, u_g$ and $u_r$ denote the vertices which are linked to the writer by edges of colors blue, green and red, respectively.} $\{u_b , W(T)\} \in E_b(T), \{u_g , W(T)\} \in E_g(T), \{u_r , W(T)\} \in E_r(T) \}$ that were previously incident upon this vertex), and then adding three new vertices, $W(T)_b, W(T)_g, W(T)_r$, together with new blue edges $\{u_b, W(T)_b\},\{W(T)_g, W(T)_r\}$, and new green edges $\{u_g, W(T)_g\},\{W(T)_b, W(T)_r\}$, and new red edges $\{u_r, W(T)_r\},\{W(T)_b, W(T)_g\}$. The net effect of these substitutions is to replace $W(T)$ with a triangle upon vertices $W(T)_b, W(T)_g, W(T)_r$.

We can show $H^*$ is equivalent to $H(T+1)$ by explicitly constructing an isomorphism $f$ from $H^*$ to $H(T+1)$ which preserves the position of the writer and the colors of the edges. Here $f$ is a bijection from the vertex set of $H^*$ to the vertex set of $H(T+1)$ which is defined as follows:
\begin{itemize}
\item For each vertex $x$ of $H^*$ such that $x< W(T) = 4.t(T)+1$, we have $f(x) = x$.

\item For each vertex $x$ of $H^*$ such that $x> W(T) = 4.t(T)+1$, we have $f(x) = x +2$.

\item $f(W(T)_b) = 4.t(T) +1, f(W(T)_g) = 4.t(T) +2$ and $f(W(T)_r) = 4.t(T) +3$.
\end{itemize}

\subsection{Proof of Theorem \ref{goldentheorem}}

Let $(X^t, G^t)$ denote the state of the system on time step $t$, evolving under the rules shown in Figure \ref{goldenrule} (starting with $G^0$ as the cube).

\begin{figure}
\centerline{\includegraphics[scale=0.55]{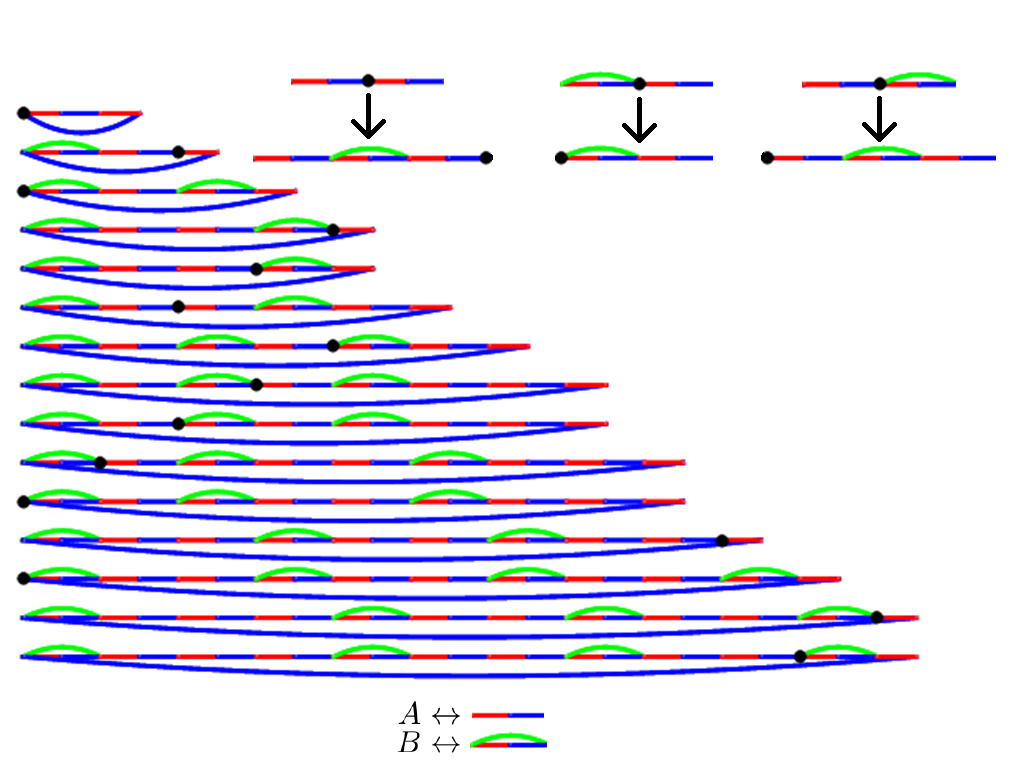}}
\vspace*{8pt}
\caption{
A representation of the (essentially one dimensional) dynamics of the system illustrated in Figure \ref{goldenrule}. This picture represents how the relevant part of the network changes over successive time steps, like Figure \ref{cycsimp1d}. The effective rewrite rules are shown at the top right. At the bottom we illustrate the symbolic substitutions we shall use to discuss this system algebraically.}
\label{golden1dB}
\end{figure}

Our method of proof will be to convert the system into a one dimensional rewrite system, and then derive a formula for the type $\rho(X^t)_{G^t}$ of the writer vertex at time $t$. We can use this information to characterize the number of vertices in $G^t$, because the number of vertices in $G^t$ only depends on the type the writer had on previous time steps (i.e. how much the network grew on previous time steps).
The way this system can be converted to a one dimensional rewrite system is illustrated in Figure \ref{golden1dB}. Rows of this figure (reading downwards) show the relevant part\footnote{The relevant part of the network $G^t$ is the subgraph induced on the cycle with edges of colors alternating between red and blue, which contains the writer, with green edges that are not part of triangles removed (see Figure \ref{cycsimp1d}).} of the network on subsequent time steps.

Let $A$ and $B$ denote the two kinds of vertices illustrated at the bottom of Figure \ref{golden1dB}. We can define these more rigorously, by saying that a vertex $v$ in $G^t$ of kind $A$ is a vertex which is not part of a triangle, that can be expressed as $v = X^t[r(br)^k]$, for some $k \geq 0$ and a vertex $v$ of kind $B$ is a vertex which is part of a triangle, that can be expressed as $v = X^t[r(br)^k]$, for some $k \geq 0$. Now we can represent our system as a word of symbols that evolves like a tag system \cite{nks}. In particular, the word $s(t) \in \{A,B\}^*$ gives the sequence of kinds of vertices one visits by starting at $X^t[r]$ (the vertex linked to the writer by a red edge) and repeatedly doing $br$ hops.
More precisely $s(t)=s(t)_0s(t)_1...s(t)_{|s(t)| -1}$ and $\forall k \in \{0,1,...,|s(t)| -1 \}$ we have
 \begin{eqnarray}
s(t)_k = \left\{\begin{array}{ll}
A & \text{if } \rho(X^t[r(br)^k])_{G^t}=0,\\
B  & \text{if } \rho(X^t[r(br)^k])_{G^t} = g.
\end{array}\right. ~
\end{eqnarray}
Here the length of our word is $|s(t)| = \min \{ k>0 : X^t[(rb)^k] = X^t \}$, which is the number of red edges in the alternating $rb$ cycle.

As the system is updated, the writer modifies the alternating $rb$ cycle and moves around it. The way the system gets updated is determined by the type of the writer, which is determined by the forms of $s(t)_{|s(t)|-1}$ and $s(t)_0$ (which represent the kinds of the vertices to the writer's left and right, respectively). In our representation, using $s(t)$, an update corresponds to making $s(t+1)$ by taking $s(t)$, altering the elements at the extreme left or right hand side (since these are the parts adjacent to the writer) and also rotating the word (to represent the writers movement).

$s(0)=AA$

$s(1) = ABA$

$s(2) = BABA$

$s(3) = ABAAB$

To see the way $s(t)$ evolves with time, note that $\forall t \geq 0$ we have:
\begin{enumerate}
\item If $s(t)=AxA$, for some $x \in \{A,B \}^*$ then $\rho(X^t)_{G^t}=0$ and $s(t+1)=xABA$.
\item If $s(t)=AxB$, for some $x \in \{A,B \}^*$ then $\rho(X^t)_{G^t}=r$ and $s(t+1)=BAx$.
\item If $s(t)=BxA$, for some $x \in \{A,B \}^*$ then $\rho(X^t)_{G^t}=b$ and $s(t+1)=ABAx$.
\end{enumerate}
These three statements are sufficient to describe the evolution of the system. They correspond to the rewrite rules shown at the top right of Figure \ref{golden1dB}. For a word $x = x_0 x_1...x_{m-1}$ let $\last(x) = x_{m-1}$ denote the last element of $x$. The key to our proof is to establish a correspondence between the dynamics of $s(t)$ and the dynamics of the binary tag system $K(T)$. Here $K(T)$ denotes the word obtained by updating $K(0)=0$, a number of times equal to $T$, under the tag system with rules $0 \rightarrow 01$ and $1 \rightarrow 011$. More precisely, $\forall T \geq 0$ we have:
\begin{enumerate}
\item If there exists a word $x \in \{ 0 , 1 \}^*$ such that $K(T) = x0$ then $K(T+1) = 01x$.
\item If there exists a word $x \in \{ 0 , 1 \}^*$ such that $K(T) = x1$ then $K(T+1) = 011x$.
\end{enumerate}

$K(0)=0$

$K(1)=01$

$K(2)=0110$

$K(3)=01011$

The correspondence between $K(T)$ and $s(t)$ involves the mapping $\mu: \{0,1 \}^* \mapsto \{A,B \}^*$ defined such that for each binary word $x \in \{0,1 \}^*$ we have that $\mu(x)$ is the word obtained by taking $x$ and replacing each $0$ with $AB$ and each $1$ with $A$. The correspondence is not trivial. When $\last(K(T)) = 0$, the single step $K(T) \rightarrow K(T+1)$ corresponds to two steps, $s(t) \rightarrow s(t+1) \rightarrow s(t+2)$, in the $s$ system. When $\last(K(T)) = 1$, the single step $K(T) \rightarrow K(T+1)$ corresponds to four steps, $s(t) \rightarrow s(t+1) \rightarrow s(t+2)\rightarrow s(t+3) \rightarrow s(t+4)$, in the $s$ system.

$\forall T \geq 0$ let us define the $s$-time $\gamma(T)$ corresponding with $T$ as
\begin{equation}\gamma(T) = 2 + 2.T + 2.|\{ i \in \{0,..,T-1\}: \last(K(i))=1 \}|.\end{equation}
Here $s(\gamma(T))$ represents $K(T)$, as we describe with our next result.
\begin{lemma}\label{coro}

$\forall T \geq 0$ the following equations hold:
\begin{equation}s(\gamma(T)) = B \mu(K(T)) A \end{equation}
\begin{equation}s(\gamma(T)+1) = ABA \mu(K(T)).\end{equation}
Moreover, if $\last(K(T))=1$ then the following two equations also hold:
\begin{equation}s(\gamma(T)+2) = BA \mu(K(T)) BA \end{equation}
\begin{equation}s(\gamma(T)+3) = ABAA\mu(K(T))B.\end{equation}
\end{lemma}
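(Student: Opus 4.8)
The plan is to prove the four claimed identities simultaneously by induction on $T$, pushing the word $s(\cdot)$ forward through the two or four micro-steps of the $s$-rewrite rules listed just before the lemma and matching the outcome against $\mu$ applied to the new tag word $K(T+1)$. First I would record two bookkeeping facts. Directly from the definition of $\gamma$ one has
\[
\gamma(T+1)-\gamma(T)=\begin{cases}2 & \text{if }\last(K(T))=0,\\ 4 & \text{if }\last(K(T))=1,\end{cases}
\]
which is precisely the ``one $K$-step equals two (resp.\ four) $s$-steps'' phenomenon noted just above the lemma; and from the definition of $\mu$ one has the trivial identities $\mu(x0)=\mu(x)AB$, $\mu(x1)=\mu(x)A$, $\mu(01x)=ABA\,\mu(x)$ and $\mu(011x)=ABAA\,\mu(x)$ for every binary word $x$.

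For the base case $T=0$ we have $\gamma(0)=2$, $K(0)=0$, $\mu(K(0))=AB$, and the initial evolution of $s$ recorded above gives $s(2)=BABA=B\,\mu(K(0))\,A$ and $s(3)=ABAAB=ABA\,\mu(K(0))$; since $\last(K(0))=0$ the last two identities are not asserted, so the claim holds. For the inductive step, assume the lemma for $T$ and split on $\last(K(T))$. If $\last(K(T))=0$, write $K(T)=x0$, so $K(T+1)=01x$ and $\gamma(T+1)=\gamma(T)+2$; the induction hypothesis gives $s(\gamma(T)+1)=ABA\,\mu(K(T))=ABA\,\mu(x)\,AB$, a word beginning with $A$ and ending with $B$, so the rewrite rule for the shape $AxB$ yields $s(\gamma(T)+2)=BA\cdot(BA\,\mu(x)\,A)=B\,(ABA\,\mu(x))\,A=B\,\mu(K(T+1))\,A$, and the rule for the shape $BxA$ then yields $s(\gamma(T)+3)=ABA\cdot(ABA\,\mu(x))=ABA\,\mu(K(T+1))$. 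If $\last(K(T))=1$, write $K(T)=x1$, so $K(T+1)=011x$ and $\gamma(T+1)=\gamma(T)+4$; here the hypothesis supplies all four identities for $T$, in particular $s(\gamma(T)+3)=ABAA\,\mu(K(T))\,B=ABAA\,\mu(x)\,AB$, and applying the $AxB$-rule then the $BxA$-rule gives $s(\gamma(T)+4)=B\,(ABAA\,\mu(x))\,A=B\,\mu(K(T+1))\,A$ and $s(\gamma(T)+5)=ABA\cdot(ABAA\,\mu(x))=ABA\,\mu(K(T+1))$. In either subcase the first two identities hold for $T+1$.

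It remains to obtain the last two identities for $T+1$ when $\last(K(T+1))=1$. In that case write $K(T+1)=y1$, so $\mu(K(T+1))=\mu(y)\,A$; from the identity just proved, $s(\gamma(T+1)+1)=ABA\,\mu(y)\,A$, which begins and ends with $A$, so the rule for the shape $AxA$ gives $s(\gamma(T+1)+2)=(BA\,\mu(y))\cdot ABA=BA\,\mu(y)\,ABA=BA\,\mu(K(T+1))\,BA$, and since this word begins with $B$ and ends with $A$ the $BxA$-rule gives $s(\gamma(T+1)+3)=ABA\cdot(A\,\mu(y)\,AB)=ABAA\,\mu(y)\,AB=ABAA\,\mu(K(T+1))\,B$. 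This closes the induction.

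The substance of the argument is the repeated word bookkeeping in the last two paragraphs: at each micro-step one must check that the current word over $\{A,B\}$ really has the precise left/right-end shape demanded by the rewrite rule being invoked, and that after the prescribed rotation the surviving letters recombine into $\mu$ of the new tag word. I expect this to be the only delicate point; there is no conceptual obstacle beyond it, since the formula for $\gamma(T)$ is designed precisely so that after these $2$ or $4$ micro-steps the $s$-configuration is again ``$\mu$-synchronized'' with $K$.
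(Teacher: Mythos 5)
Your proof is correct and follows essentially the same route as the paper's: induction on $T$ with a case split on $\last(K(T))$, pushing $s(\cdot)$ through the two or four micro-steps dictated by $\gamma(T+1)-\gamma(T)$ and regrouping the resulting letters to match $B\,\mu(K(T+1))\,A$, etc. The only (cosmetic) difference is that you derive the last two identities for $T+1$ once from $s(\gamma(T+1)+1)=ABA\,\mu(K(T+1))$ under the hypothesis $\last(K(T+1))=1$, rather than repeating that computation inside each subcase as the paper does.
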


\begin{proof}
For $T=0$ we have $\gamma(T)=2$, $K(T) = 0$, $\last(K(T)) = 0$, $\mu(K(T))=AB$, $s(\gamma(T)) = BABA$ and $s(\gamma(T)+1) = ABAAB$, and so our result hold in this case.
Suppose the result holds for some $T \geq 0$. We will show that it holds for $T+1$.

First let us consider the case where $\last(K(T))=0$. In this case there must exist some $x \in \{ 0 , 1 \}^*$ such that $K(T) = x0$, and so we have
\begin{eqnarray}
\label{e1}
  s(\gamma(T)+1) & = & ABA \mu(K(T)) \\
   & = & ABA \mu(x) \mu(0) \\
   & = & ABA \mu(x) AB.
\end{eqnarray}
Here, Eq. (\ref{e1}) follows from our assumption that the result holds for $T$.
It follows (from the update rules of $s$) that $s(\gamma(T)+2) = BABA \mu(x) A$. Also, in this case,
\begin{eqnarray}
  \gamma(T+1) & = & 2 + 2.(T+1) + 2.|\{ i \in \{0,..,T\}: \last(K(i))=1 \}| \\
   & = & \gamma(T)+2.
\end{eqnarray}
Moreover, $K(T+1) = 01x$, and so we have
\begin{eqnarray}
  B \mu(K(T+1)) A & = & BABA\mu(x)A \\
   & = & s(\gamma(T+1)),
   \label{rat1}
\end{eqnarray}
(as we require). It follows (from the update rules of $s$) that
\begin{equation} s(\gamma(T+1)+1) = ABA\mu(K(T+1)) \label{rat2} \end{equation}
(as we require). Moreover, if $\last(K(T+1)) = 1$ then there exists a binary word $y \in \{0,1\}^*$ such that $K(T+1) = y1$. Now
\begin{eqnarray}
  s(\gamma(T+1)+1) & = & ABA\mu(y) \mu(1) \\
   & = & ABA\mu(y)A.
\end{eqnarray}
Now it follows (from the update rules of $s$) that
\begin{eqnarray}
  s(\gamma(T+1)+2) & = & BA\mu(y)ABA \\
   & = & BA\mu(y1)BA \\
   & = & BA\mu(K(T+1))BA,
   \label{rat3}
\end{eqnarray}
(as we require). Now it follows (from the update rules of $s$) that
\begin{equation} s(\gamma(T+1)+3) = ABAA\mu(K(T+1))B \label{rat4} \end{equation}
 (as we require). So now we have shown that the result holds for $T+1$ in this case (where $\last(K(T))=0$).

Now let us consider the other case, where $\last(K(T))=1$. In this case there must exist some $x \in \{ 0 , 1 \}^*$ such that $K(T) = x1$.
Our assumption that the result holds for $T$ gives us $s(\gamma(T)+3) = ABAA\mu(K(T))B$. It follows (from the update rules of $s$) that
\begin{eqnarray}
  s(\gamma(T)+4) & = & BABAA\mu(K(T)) \\
   & = & BABAA\mu(x)\mu(1) \\
   & = & BABAA\mu(x)A.
\end{eqnarray}
In this case,
\begin{eqnarray}
  \gamma(T+1) & = & 2 + 2.(T+1) + 2.|\{ i \in \{0,..,T\}: \last(K(i))=1 \}| \\
   & = & BABAA\mu(x)\mu(1) \\
   & = & \gamma(T)+4.
\end{eqnarray}
Hence $s(\gamma(T+1)) = s(\gamma(T)+4) = BABAA\mu(x)A$. Moreover, $K(T+1) = 011x$, and so we have
\begin{eqnarray}
  B \mu(K(T+1)) A & = & BABAA\mu(x)A \\
   & = & BABAA\mu(x)\mu(1) \\
   & = & s(\gamma(T+1)),
\end{eqnarray}
(as we require).
Now we can establish that
\begin{equation} s(\gamma(T+1)+1) = ABA\mu(K(T+1)), \end{equation}
in exactly the same way we showed Eq. (\ref{rat2}). In addition, if $\last(K(T+1)) = 1$ then there exists a binary word $y \in \{0,1\}^*$ such that $K(T+1) = y1$, and we can establish that
\begin{equation} s(\gamma(T+1)+2)  =  BA\mu(K(T+1))BA \end{equation}
and
\begin{equation} s(\gamma(T+1)+3) = ABAA\mu(K(T+1))B, \end{equation}
in exactly the same way as we showed Eq. (\ref{rat3}) and Eq. (\ref{rat4}), respectively.
So now we have shown that the result holds for $T+1$ in this case (where $\last(K(T))=1$). So now we have shown that, if the result holds for $T$ then it holds for $T+1$. Now since the result holds when $T=0$ we can use induction with $T$ to prove the result holds $\forall T \geq 0$.
\end{proof}

We can use this lemma to form the following corollary, that describes the sequence $\rho(X^0)_{G^0}\rho(X^1)_{G^1}...$ in terms of the forms of $K(T):T \geq 0$.
\begin{corollary}\label{corocoro}

$\forall t \geq 0$ we have the following:
\begin{enumerate}
\item If $t>0$ and $t$ is even then $\rho(X^t)_{G^t} = b$.
\item If $t \in \{0,1 \}$ or if $t$ is of the form
$$3 + 2.T + 2.|\{ i \in \{0,..,T-1\}: \last(K(i))=1 \}|,$$
for some $T \geq 0$, that is such that $\last(K(T))=1$ then $\rho(X^t)_{G^t} = 0$.
\item Otherwise\footnote{i.e., when $t>1$, $t$ is odd, and there does not exist any $T \geq 0$ such that $3 + 2.T + 2.|\{ i \in \{0,..,T-1\}: \last(K(i))=1 \}|$} we have $\rho(X^t)_{G^t} = r$.
\end{enumerate}
\end{corollary}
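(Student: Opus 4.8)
The plan is to read off $\rho(X^t)_{G^t}$ directly from the shape of the word $s(t)$, using the three update rules for $s$ (which record the writer's type as $0$, $r$ or $b$ according to whether $s(t)$ has the form $AxA$, $AxB$ or $BxA$) together with the explicit formulas for $s(\gamma(T))$, $s(\gamma(T)+1)$, $s(\gamma(T)+2)$ and $s(\gamma(T)+3)$ furnished by Lemma \ref{coro}. First I would dispose of the base cases: $s(0)=AA$ and $s(1)=ABA$ are both of the form $AxA$, so the update rules give $\rho(X^0)_{G^0}=\rho(X^1)_{G^1}=0$, which matches statement (2).

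Next I would record two elementary facts about $\gamma$. Since $\gamma(T)=2+2T+2|\{i\in\{0,\dots,T-1\}:\last(K(i))=1\}|$ is visibly even, the times $\gamma(T)$ and $\gamma(T)+2$ are even and the times $\gamma(T)+1$ and $\gamma(T)+3$ are odd. Moreover $\gamma$ telescopes: $\gamma(T+1)-\gamma(T)$ equals $2$ when $\last(K(T))=0$ and $4$ when $\last(K(T))=1$. Consequently the blocks $\{\gamma(T),\gamma(T)+1\}$ (when $\last(K(T))=0$) and $\{\gamma(T),\gamma(T)+1,\gamma(T)+2,\gamma(T)+3\}$ (when $\last(K(T))=1$) are consecutive and pairwise disjoint, and since $\gamma(0)=2$ they partition $\{2,3,4,\dots\}$. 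Hence every $t\ge 0$ falls, in exactly one way, into one of: $t\in\{0,1\}$; $t=\gamma(T)$; $t=\gamma(T)+1$; $t=\gamma(T)+2$ with $\last(K(T))=1$; or $t=\gamma(T)+3$ with $\last(K(T))=1$.

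Then I would evaluate $\rho(X^t)_{G^t}$ in each of the four non-base cases by reading the first and last letters of the relevant word from Lemma \ref{coro} (noting that $\mu(K(T))$ is non-empty and ends in $B$ when $\last(K(T))=0$, in $A$ when $\last(K(T))=1$): $s(\gamma(T))=B\mu(K(T))A$ has the form $BxA$, giving $\rho=b$; $s(\gamma(T)+1)=ABA\mu(K(T))$ has the form $AxB$ if $\last(K(T))=0$, giving $\rho=r$, and the form $AxA$ if $\last(K(T))=1$, giving $\rho=0$; and when $\last(K(T))=1$, $s(\gamma(T)+2)=BA\mu(K(T))BA$ has the form $BxA$, giving $\rho=b$, while $s(\gamma(T)+3)=ABAA\mu(K(T))B$ has the form $AxB$, giving $\rho=r$. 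Assembling: every even $t>0$ is $\gamma(T)$ or $\gamma(T)+2$, so $\rho=b$ (statement (1)); the odd times of the form $3+2T+2|\{i<T:\last(K(i))=1\}|=\gamma(T)+1$ with $\last(K(T))=1$ give $\rho=0$, as do $t\in\{0,1\}$ (statement (2)); and the remaining odd times, namely $\gamma(T)+1$ with $\last(K(T))=0$ and $\gamma(T)+3$ with $\last(K(T))=1$, give $\rho=r$ (statement (3)).

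Essentially all of this is bookkeeping. The only delicate point is the partition claim of the second paragraph — that the two- and four-element blocks indexed by $T$ tile $\{2,3,\dots\}$ with no gaps or overlaps — and, dually, the verification that the "otherwise" clause of statement (3) really is the complement, among odd integers $>1$, of the special form in statement (2). Both reduce to the telescoping identity for $\gamma$ together with the strict monotonicity of $\gamma$, so I do not expect a genuine obstacle beyond keeping the cases straight.
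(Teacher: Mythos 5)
Your proposal is correct and follows essentially the same route as the paper: both arguments read $\rho(X^t)_{G^t}$ off the first and last letters of the words $s(\gamma(T)),\dots,s(\gamma(T)+3)$ supplied by Lemma \ref{coro}, after locating each $t\ge 2$ within a block determined by $T$. Your explicit telescoping/partition formulation of that last step (and your explicit note that $\mu(K(T))$ ends in $B$ or $A$ according to $\last(K(T))$) is just a cleaner packaging of the paper's choice of the maximal $T^*$ with $\gamma(T^*)\le t$.
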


\begin{proof}
Clearly $\rho(X^0)_{G^0} = \rho(X^1)_{G^1} = 0$, so the result holds $\forall t \in \{0,1 \}$. For $t \geq 2$, let $T^*$ be the maximum integer such that $\gamma(T^*) \leq t$. We must have $T^* \geq 0$ since $\gamma(0)=2 \leq t$. Also, we must have $\gamma(T^*) \geq t-3$, since otherwise (i.e., if $\gamma(T^*) < t-3$) we would have $\gamma(T^*+1) \leq \gamma(T^*)+4 \leq t$ which would contradict our assumption that $T^*$ is maximum. Now we know $\gamma(T^*) \in \{t-3,t-2,t-1,t \}$ let us consider the different cases.
\ \\
If $\gamma(T^*) = t$ then $t$ is even and Lemma \ref{coro} gives us that $s(t) = B \mu(K(T^*)) A$ and so $\rho(X^t)_{G^t} =b$.
\ \\
If $\gamma(T^*) = t-1$ then $t = \gamma(T^*) +1$ is odd and Lemma \ref{coro} gives us that $s(t) = s(\gamma(T^*) +1) = ABA \mu(K(T^*))$ and so $\last(K(T^*)) = 0$ implies $\rho(G^t)_{X^t} = r$ and $\last(K(T^*)) = 1$ implies $\rho(G^t)_{X^t} = 0$.
\ \\
If $\gamma(T^*) = t-2$ then $t = \gamma(T^*) +2$ is even and Lemma \ref{coro} gives us that $s(t) = s(\psi(T^*) +2) = BA \mu(K(T^*)) BA$ and so $\rho(G^t)_{X^t} = b$.
\ \\
If $\gamma(T^*) = t-3$ then $t = \gamma(T^*) +3$ is odd and Lemma \ref{coro} gives us that $s(t) = s(\psi(T^*) +3) = ABAA\mu(K(T^*))B$ and so $\rho(G^t)_{X^t} = r$.

These are all the possibilities. So we have $\forall t \geq 2$ that, $\rho(G^t)_{X^t} = b$ if and only if $t$ is even. When $t \geq 2$ is odd, we must have $\rho(G^t)_{X^t} \in \{ 0 , r \}$.
\ \\
Moreover, $\rho(G^t)_{X^t} = 0$ if and only if $t = \gamma(T^*) +1$ where $\last(K(T^*)) = 1$. In other words, $\rho(G^t)_{X^t} = 0$ if and only if there exists some $T \geq 0$ such that $t = 1 + \gamma(T) = 3 + 2.T + 2.|\{ i \in \{0,..,T-1\}: \last(K(i))=1 \}|$ and $\last(K(T))=1$.
\ \\
Also, if $\rho(G^t)_{X^t} \notin \{ b , 0 \}$ then $\rho(G^t)_{X^t} = r$.
\end{proof}
Now we have established Corollary \ref{corocoro}, we can gain a description of the sequence $\rho(X^0)_{G^0}\rho(X^1)_{G^1}...$ by getting a formula for $\last(K(T))$. Our tag system $K(T)$ evolves like a slow version of the neighbor independent substitution system $Z(n)$, where every element of the word is updated every time step under rules $0 \rightarrow 01$, $1 \rightarrow 011$.

More precisely, let us define the mapping $\Omega: \{0,1\}^* \mapsto \{0,1\}^*$ such that $\forall x \in \{0,1\}^*$ we have that $\Omega(x)$ is the word obtained by taking $x$ and replacing each $0$ with $01$ and replacing each $1$ with $011$.

Now let $Z(0)=0$ and $\forall n \geq 0$ let $Z(n+1)=\Omega(Z(n))$


$Z(0)=0$

$Z(1)=01$

$Z(2)=01011$

$Z(3) = 0101101011011$.

For a word, $x = x_0x_1...x_{m-1}$ let $\REV(x) = x_{m-1}x_{m-2}...x_0$ denote the reversal of the order of characters in the word.
\begin{lemma}\label{revrev}

The infinite word $\last(K(0))\last(K(1))...$ is equal to the infinite word $\REV(Z(0))\REV(Z(1))...$
\end{lemma}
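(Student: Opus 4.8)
The plan is to slice the left-hand infinite word $\last(K(0))\last(K(1))\cdots$ into consecutive finite blocks, one block for each $n$, and to show that the $n$-th block is exactly $\REV(Z(n))$. For $n\ge -1$ set $t_n=\sum_{j=0}^{n}|Z(j)|$, so $t_{-1}=0$, $t_0=1$, and $t_n=t_{n-1}+|Z(n)|$; since each $|Z(j)|\ge 1$ the numbers $t_n$ strictly increase without bound, so the intervals $I_n=\{t_{n-1},t_{n-1}+1,\dots,t_n-1\}$ for $n\ge 0$ partition the set $\{0,1,2,\dots\}$ of time steps. The lemma will follow once the following is established for every $n\ge 0$:
\[
\text{(a)}\quad K(t_{n-1})=Z(n),\qquad\qquad \text{(b)}\quad \last(K(t_{n-1}))\,\last(K(t_{n-1}+1))\cdots\last(K(t_n-1))=\REV(Z(n)),
\]
since concatenating the words (b) over all $n$ and reassembling the $I_n$ gives precisely $\last(K(0))\last(K(1))\cdots=\REV(Z(0))\REV(Z(1))\cdots$. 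I would prove (a) by induction on $n$, and read off (b) along the way.

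The base case $n=0$ is immediate: $K(t_{-1})=K(0)=0=Z(0)$, and the one-letter block (b) is $\last(K(0))=0=\REV(Z(0))$. For the inductive step, assume $K(t_{n-1})=Z(n)$ and write $m=|Z(n)|$ and $Z(n)=Z(n)_0\cdots Z(n)_{m-1}$. The point to exploit is that over the next $m$ updates the system simply ``digests'' this copy of $Z(n)$ one letter at a time from its right-hand end. Concretely, using only the update rule $K(T+1)=\Omega(\last(K(T)))\cdot x$ (where $K(T)=x\,\last(K(T))$), that $\Omega$ is a morphism, and that prepending $\Omega(c)$ never disturbs the right-hand end, one checks by induction on $i\in\{0,\dots,m\}$ that
\[
K(t_{n-1}+i)=\Omega\!\big(Z(n)_{m-i}\big)\,\Omega\!\big(Z(n)_{m-i+1}\big)\cdots \Omega\!\big(Z(n)_{m-1}\big)\cdot Z(n)_0\,Z(n)_1\cdots Z(n)_{m-1-i},
\]
with empty products at the ends of the index range being the empty word. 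For $0\le i<m$ the last letter of $K(t_{n-1}+i)$ is the genuine letter $Z(n)_{m-1-i}$, so $\last(K(t_{n-1}+i))=Z(n)_{m-1-i}$ and hence the block (b) equals $Z(n)_{m-1}Z(n)_{m-2}\cdots Z(n)_0=\REV(Z(n))$; and at $i=m$ the second factor above vanishes, leaving $K(t_n)=K(t_{n-1}+m)=\Omega(Z(n)_0)\cdots\Omega(Z(n)_{m-1})=\Omega(Z(n))=Z(n+1)$, which is (a) for $n+1$. That closes the induction, hence the lemma.

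The one step deserving care — and the one I would flag as the crux — is the displayed recursion for $K(t_{n-1}+i)$: one must verify that the letter removed at each of these $m$ steps is always an \emph{original} letter of the word $Z(n)$ and never one of the freshly prepended letters, which is exactly why the block has length precisely $m=|Z(n)|$ and terminates with $\Omega(Z(n))=Z(n+1)$ rather than a longer word. Everything else is bookkeeping. (A more conceptual alternative: reversing every word, $T\mapsto\REV(K(T))$ runs as an ordinary first-in-first-out queue under the letter-to-word rule $0\mapsto 10$, $1\mapsto 110$ started from $0$; its stream of successively dequeued front letters is the unique infinite word $w$ with $w_0=0$ whose tail $w_1w_2\cdots$ equals $w$'s image under that rule applied letter-by-letter, and $\REV(Z(0))\REV(Z(1))\cdots$ obeys the same recursion. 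The direct induction above is shorter and matches the style of the other appendix proofs.)
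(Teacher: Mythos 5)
Your proof is correct and is essentially the paper's own argument: the paper likewise observes that over the $|K(T)|$ updates starting from $K(T)$ the last letters read off form $\REV(K(T))$ while the word evolves to $K(T+|K(T)|)=\Omega(K(T))$, and then inducts from $K(0)=Z(0)$. Your version merely makes the block decomposition $t_n=\sum_{j\le n}|Z(j)|$ and the intermediate formula for $K(t_{n-1}+i)$ explicit where the paper says ``following this pattern,'' which is a welcome tightening but not a different route.
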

\begin{proof}
For a word $K(T) = K(T)_0K(T)_1...K(T)_{|K(T)|-1}$ we have
\begin{equation}K(T+1) = \Omega(K(T)_{|K(T)|-1})K(T)_0...K(T)_{|K(T)|-2}\end{equation} and
\begin{equation}K(T+2) = \Omega(K(T)_{|K(T)|-2})\Omega(K(T)_{|K(T)|-1})K(T)_0...K(T)_{|K(T)|-3}.\end{equation}
Following this pattern, one can see that
$$\last(K(T))\last(K(T+1))...\last(K(T+|K(T)|))$$
is equal to
$$K(T)_{|K(T)|-1} K(T)_{|K(T)|-2}...K(T)_0 = \REV(K(T)).$$
Moreover, after we have performed $|K(T)|$ updates on $K(T),$ we will get
\begin{equation}K(T+|K(T)|)=\Omega(K(T)_{0}) \Omega(K(T)_{1})...\Omega(K(T)_{|K(T)|-}) = \Omega(K(T)).\end{equation}
From this one can see that $K(T)$ evolves like a slow version of the substitution system $Z(n)$. When $K(T) = Z(n)$ (as is the case when $T=n=0$) we will have \begin{equation}K(T+|K(T)|) = \Omega(K(T)) =\Omega(Z(n)) = Z(n+1)\end{equation}
 and
 \begin{equation}\last(K(T))\last(K(T+1))...\last(K(T+|K(T)|)) = \REV(Z(n)).\end{equation}
 Our result can then be proved by induction.
 \end{proof}
\begin{lemma}\label{glue}

$\forall n \geq 1$ we have $Z(n+1)=Z(n)Z(n)Z(n-1)...Z(1)1$
\end{lemma}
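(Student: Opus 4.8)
The plan is to induct on $n$. The only structural fact I would need is that $\Omega$ is a morphism of the free monoid $\{0,1\}^*$, i.e. $\Omega(xy)=\Omega(x)\Omega(y)$ for all words $x,y$ — this is immediate from the definition of $\Omega$ as a letter‑by‑letter substitution — together with the identity $\Omega(Z(k))=Z(k+1)$, which is built into the definition of the sequence $Z$.

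For the base case $n=1$ I would simply check the identity directly: $Z(1)=\Omega(0)=01$ and $Z(2)=\Omega(01)=01011$, so $Z(2)=01011=(01)(01)(1)=Z(1)Z(1)1$, as required.

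For the inductive step, assume $Z(n+1)=Z(n)Z(n)Z(n-1)\cdots Z(1)1$ for some $n\ge 1$. Applying $\Omega$ to both sides and using that $\Omega$ respects concatenation gives
$$Z(n+2)=\Omega(Z(n+1))=\Omega(Z(n))\Omega(Z(n))\Omega(Z(n-1))\cdots\Omega(Z(1))\Omega(1).$$
Since $\Omega(Z(k))=Z(k+1)$ for every $k\ge 0$ and $\Omega(1)=011$, this becomes $Z(n+2)=Z(n+1)Z(n+1)Z(n)\cdots Z(2)\,011$. Finally, because $Z(1)=01$, the trailing block $011$ equals $Z(1)1$, so $Z(n+2)=Z(n+1)Z(n+1)Z(n)\cdots Z(2)Z(1)1$, which is exactly the claimed identity with $n$ replaced by $n+1$. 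Induction then yields the result for all $n\ge 1$.

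I do not expect a genuine obstacle here; the only point that requires a little care is the bookkeeping at the right‑hand end of the word. The inductive hypothesis terminates in $\cdots Z(1)1$, and under $\Omega$ the trailing $Z(1)$ turns into $Z(2)$ while the final lone letter $1$ turns into $\Omega(1)=011=Z(1)1$ — recognizing this last equality is precisely what makes the telescoping list of blocks close up with the correct final block. Everything else is just the formal statement that $\Omega$ commutes with concatenation.
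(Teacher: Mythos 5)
Your proof is correct and is essentially identical to the paper's: both proceed by induction on $n$, verify $Z(2)=Z(1)Z(1)1$ as the base case, apply $\Omega$ to the inductive hypothesis using that it respects concatenation, and close the telescoping by recognizing $\Omega(1)=011=Z(1)1$. You merely make explicit the morphism property and the trailing-block bookkeeping that the paper leaves implicit.
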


\begin{proof}
$Z(2)=01011=Z(1)Z(1)1$ so the result holds for $n=1$. Suppose that the result holds for some $n \geq 1$, now
\begin{eqnarray}
  Z(n+2) & = & \Omega(Z(n+1)) \\
   & = & \Omega(Z(n)Z(n)Z(n-1)...Z(1)1) \\
   & = & Z(n+1)Z(n+1)Z(n)...Z(2)011 \\
   & = & Z(n+1)Z(n+1)Z(n)...Z(2)Z(1)1,
\end{eqnarray}
and so the the result holds for $n+1$. The result is proved, in general, by induction with $n$.
\end{proof}

For a word $x=x_0x_1...x_{m-2}x_{m-1} \in \{0,1 \}^*$ let $\tog(x)=(1-x_0)x_1...x_{m-2}(1-x_{m-1})$ be the word obtained by altering the characters at $x$'s extreme left and right hand sides.

Let $L(x) = x_1x_2...x_{m-1}$ denote the word obtained by deleting the left most element from $x$.

\begin{lemma}\label{tog}
$\forall m \geq 1$, for any sequence of words $a^1,a^2...,a^m \in \{01,011\}$ we have $\tog(a^1)\tog(a^2)...\tog(a^m)=L(a^1)a^2...a^m0$.

\end{lemma}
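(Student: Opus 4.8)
The plan is to reduce the whole statement to the single identity
\[
0\,\tog(a) = a\,0 \qquad \text{for every } a \in \{01,011\},
\]
viewed as an equality of words in $\{0,1\}^*$. This identity is immediate from the definition of $\tog$: every $a \in \{01,011\}$ begins with $0$ and ends with $1$, so $\tog(a)$ is just $a$ with its first symbol changed from $0$ to $1$ and its last symbol changed from $1$ to $0$; writing $a = 0\beta 1$ (where $\beta$ is the empty word if $a=01$ and $\beta = 1$ if $a=011$) gives $\tog(a) = 1\beta 0$, whence $0\,\tog(a) = 0\,1\beta\,0 = 0\,\beta 1\,0 = a\,0$, using that $1\beta = \beta 1$ because $\beta$ is a block of $1$'s. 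One could equally just check the two cases $\tog(01)=10$, $\tog(011)=110$ by hand.

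Granting this, I would prove the lemma by prepending a $0$ and telescoping. Applying the identity to the leftmost relevant factor over and over,
\[
0\,\tog(a^1)\tog(a^2)\cdots\tog(a^m) \;=\; a^1\,0\,\tog(a^2)\cdots\tog(a^m) \;=\; \cdots \;=\; a^1 a^2 \cdots a^m\, 0,
\]
which is formally an induction on the number $m$ of factors, with the identity serving as the inductive step. Now $a^1$ begins with $0$, so $a^1 = 0\,L(a^1)$, and therefore the right-hand side equals $0\,L(a^1)\,a^2 \cdots a^m\, 0$. Hence $0\,\tog(a^1)\cdots\tog(a^m) = 0\,L(a^1)\,a^2\cdots a^m\,0$, and cancelling the common leading symbol $0$ (left cancellation in the free monoid $\{0,1\}^*$) gives exactly $\tog(a^1)\tog(a^2)\cdots\tog(a^m) = L(a^1)\,a^2\cdots a^m\,0$, as claimed.

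There is no genuinely hard step: the only idea needed is the reformulation $0\,\tog(a)=a\,0$, after which everything is routine telescoping. The points to watch are purely bookkeeping at the word endpoints — that $\tog$ flips only the extreme symbols (so the middle all-$1$'s block is untouched and commutes with the flipped symbol next to it), and that $a^1$ is a nonempty word beginning with $0$, which makes both the extraction of $L(a^1)$ and the final cancellation of the leading $0$ legitimate. Both hold since each $a^i \in \{01,011\}$ has length $\ge 2$.
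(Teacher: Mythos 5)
Your proof is correct and essentially the same as the paper's: both arguments are an induction on $m$ whose inductive step is exactly the identity $0\,\tog(a)=0\,L(a)\,0=a\,0$ for $a\in\{01,011\}$, which the paper uses without naming while you isolate it and telescope from the left. The only cosmetic difference is that you prepend a $0$ and cancel it at the end, whereas the paper handles the first factor directly via $\tog(a^1)=L(a^1)0$.
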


\begin{proof}
This clearly holds when $m=1$ since $\tog(01)=10=L(01)0$ and $\tog(011)=110 = L(011)0$. Suppose that the result holds for $m \geq 1$.
Now $\tog(a^1)\tog(a^2)...\tog(a^{m})\tog(a^{m+1})$
is equal to
$L(a^1)a^2...a^{m}0\tog(a^{m+1})$, which is equal to
$L(a^1)a^2...a^{m}0L(a^{m+1})0$, which is equal to $L(a^1)a^2...a^{m}a^{m+1}0$ as required. So the result holds for $m+1$. By induction with $m$ the result is proved in general.

\end{proof}

\begin{lemma}\label{togrev}

$\forall n \geq 1$ we have $\tog(Z(n)) = \REV(Z(n))$, $Z(n)_0 = 0$ and $\last(Z(n))=1$.

\end{lemma}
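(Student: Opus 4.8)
The plan is to prove the three assertions \emph{simultaneously} by induction on $n$; they must travel together because the first assertion at level $n+1$ uses the endpoint assertions at level $n$. The base case $n=1$ is immediate, since $Z(1)=01$ gives $\tog(Z(1))=10=\REV(Z(1))$, $Z(1)_0=0$ and $\last(Z(1))=1$. For the inductive step, suppose all three hold for some $n\ge 1$. Because $Z(n+1)=\Omega(Z(n))$ and every word in the image of $\Omega$ begins with $0$ (as $\Omega(0)=01$ and $\Omega(1)=011$ both do) and ends with $1$, the assertions $Z(n+1)_0=0$ and $\last(Z(n+1))=1$ follow with no appeal to the hypothesis.

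The content is the identity $\tog(Z(n+1))=\REV(Z(n+1))$. I would first record the elementary observation that, for a word $w$ starting with $0$ and ending with $1$, the condition $\tog(w)=\REV(w)$ is equivalent to the statement that the word obtained from $w$ by deleting its first and last letters is a palindrome (strip the matching outer letters $1$ and $0$ from both sides of $\tog(w)=\REV(w)$); call this middle word $M_k$ when $w=Z(k)$. By the inductive hypothesis and this observation, $Z(n)=0\,M_n\,1$ with $M_n=\REV(M_n)$, so $Z(n+1)=\Omega(0)\,\Omega(M_n)\,\Omega(1)=01\,\Omega(M_n)\,011$ and hence $M_{n+1}=1\,\Omega(M_n)\,01$. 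Reversing this and cancelling the outer $1$'s, the palindromicity of $M_{n+1}$ is exactly the identity $0\cdot\REV(\Omega(M_n))=\Omega(M_n)\cdot 0$, which, since $M_n=\REV(M_n)$, is a special case of the general word identity $0\cdot\REV(\Omega(w))=\Omega(\REV(w))\cdot 0$ (equivalently $\Omega(w)\cdot 0=0\cdot\REV(\Omega(\REV(w)))$), valid for every $w\in\{0,1\}^*$. That identity I would prove by ``sliding a trailing $0$ through $\Omega(w)$ block by block'', using only the two letter-level facts $\Omega(0)\cdot 0=010=0\cdot\REV(\Omega(0))$ and $\Omega(1)\cdot 0=0110=0\cdot\REV(\Omega(1))$ together with $\REV(\Omega(c_1))\cdots\REV(\Omega(c_k))=\REV(\Omega(\REV(c_1\cdots c_k)))$. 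Alternatively one can route this through Lemma~\ref{tog}: write $Z(n+1)$ as the concatenation of the blocks $\Omega(Z(n)_0),\dots,\Omega(Z(n)_{|Z(n)|-1})\in\{01,011\}$, note that $\tog(a)=\REV(a)$ for each such block, apply Lemma~\ref{tog}, and bookkeep the reversal.

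I expect the only real friction to be this bookkeeping: keeping straight how deleting or toggling the two outer letters of $Z(n+1)$ translates into the corresponding operations on the shorter word $Z(n)$ (or on $\REV(Z(n))$), and verifying the letter-level base facts that drive the ``sliding'' identity. Once the palindrome reformulation is in hand the rest is mechanical, and the induction then closes, giving all three statements for every $n\ge 1$.
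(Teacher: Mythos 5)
Your proposal is correct and follows essentially the same route as the paper: an induction on $n$ whose key step uses the palindromicity of the middle of $Z(n)$ (equivalent to $\tog(Z(n))=\REV(Z(n))$ given the fixed outer letters) and closes by sliding a trailing $0$ through the concatenation of $\Omega$-blocks from $\{01,011\}$, which is exactly the content of Lemma~\ref{tog} combined with $\REV=\tog$ on those blocks. Your packaging via the identity $0\cdot\REV(\Omega(w))=\Omega(\REV(w))\cdot 0$ is a slightly cleaner statement of the same computation the paper carries out explicitly in Eqs.~(\ref{jerry})--(\ref{tom4}).
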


\begin{proof}

$\tog(Z(1))=10= \rev(Z(1))$, so our result holds when $n=1$. Suppose the result holds for some $n \geq 1$. We write $Z(n) = Z(n)_0...Z_{|Z(n)|-1}$, now
\begin{eqnarray}
  \tog(Z(n+1)) & = & \tog(\Omega(Z(n)_0)...\Omega(Z(n)_{|Z(n)|-1})) \\
   & = & \tog(01\Omega(Z(n)_1)...\Omega(Z(n)_{|Z(n)|-2})011) \\
   & = & 11\Omega(Z(n)_1)...\Omega(Z(n)_{|Z(n)|-2})010,
   \label{jerry}
\end{eqnarray}
also,
\begin{eqnarray}
  \REV(Z(n+1)) & = & \REV(\Omega(Z(n)_{|Z(n)|-1}))...\REV(\Omega(Z(n)_{0})) \\
   \label{tom0}
   & = & 110 \REV(\Omega(Z(n)_{|Z(n)|-2}))...\REV(\Omega(Z(n)_{1}))10 \\
   \label{tom1}
   & = & 110 \REV(\Omega(Z(n)_{1}))...\REV(\Omega(Z(n)_{|Z(n)|-2}))10 \\
   \label{tom2}
   & = & 110 \tog(\Omega(Z(n)_{1}))...\tog(\Omega(Z(n)_{|Z(n)|-2}))10 \\
    \label{tom3}
   & = & 110 L(\Omega(Z(n)_{1}))\Omega(Z(n)_{2})...\Omega(Z(n)_{|Z(n)|-2})010 \\
   \label{tom4}
    & = & 11 \Omega(Z(n)_{1})\Omega(Z(n)_{2}...\Omega(Z(n)_{|Z(n)|-2})010
\end{eqnarray}
The reason Eq. (\ref{tom0}) implies Eq. (\ref{tom1}) is that $\REV(Z(n)) = \tog(Z(n))$ implies
$$Z(n)_{1}...Z(n)_{|Z(n)|-2} = Z(n)_{|Z(n)|-2}...Z(n)_{1}$$
 is palindromic, which implies
 $$\REV(\Omega(Z(n)_{|Z(n)|-2}))...\REV(\Omega(Z(n)_{1})) = \REV(\Omega(Z(n)_{1}))...\REV(\Omega(Z(n)_{|Z(n)|-2})).$$
 The reason Eq. (\ref{tom1}) implies Eq. (\ref{tom2}) is that we always have $\Omega(Z(n)_{i}) \in \{01,011\}$, and so $\REV(\Omega(Z(n)_{i})) = \tog(\Omega(Z(n)_{i}))$. We can use Lemma \ref{tog} to derive Eq. (\ref{tom3}) from Eq. (\ref{tom2}), and Eq. (\ref{tom4}) follows because the left most entry in $\Omega(Z(n)_{1})$ is $0$.

Now, since Eq. (\ref{tom4}) is identical to Eq. (\ref{jerry}), we have proved that our result holds for $n+1$. The fact that the result holds $\forall n \geq 1$ then follows by induction.

\end{proof}

Now we can state our critical result. Recall that  $\last(K(0))\last(K(1))...$ is the infinite sequence consisting of the last elements of the words $K(0),K(1)...$ (evolving under our tag system).

\begin{lemma}\label{crit}
$\last(K(0))\last(K(1))...=\lim_{n \rightarrow \infty} Z(n)$
\end{lemma}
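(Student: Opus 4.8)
The strategy is to identify $\lim_{n\to\infty}Z(n)$ intrinsically as the unique infinite word fixed by $\Omega$, and then show that the concatenation appearing on the left-hand side of Lemma \ref{revrev} is also fixed by $\Omega$.

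First I would record two preliminary facts. The infinite word $W:=\lim_{n\to\infty}Z(n)$ is well defined, since $Z(n)$ is a prefix of $Z(n+1)$ (immediate from Lemma \ref{glue}, or simply because $\Omega$ expands every letter) and $|Z(n)|\to\infty$; moreover $\Omega(W)=W$, because $\Omega(Z(n))=Z(n+1)$ makes $Z(n+1)$ a prefix of $\Omega(W)$ for every $n$, and an infinite word that is a prefix of an infinite word equals it. In fact $W$ is the \emph{only} infinite word with $\Omega(W)=W$: any such word begins with $0$ (as both $\Omega(0)=01$ and $\Omega(1)=011$ do), hence contains $Z(0)=0$ as a prefix, hence — applying $\Omega$ repeatedly and using $\Omega(Z(n))=Z(n+1)$ — contains every $Z(n)$ as a prefix, hence equals $W$. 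By Lemma \ref{revrev} it therefore suffices to prove $\Omega(C)=C$, where $C:=\REV(Z(0))\REV(Z(1))\REV(Z(2))\cdots$.

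The heart of the argument is the following local identity: for every $n\ge 0$ there is a finite word $X_n$ with
\[\Omega(\REV(Z(n)))=0\,X_n\qquad\text{and}\qquad\REV(Z(n+1))=X_n\,0.\]
For $n=0$ this holds with $X_0=1$, since $\Omega(\REV(Z(0)))=\Omega(0)=01$ and $\REV(Z(1))=\REV(01)=10$. For $n\ge 1$, Lemma \ref{togrev} tells us $Z(n)$ starts with $0$ and ends with $1$, so we may write $Z(n)=0\,u\,1$, and it also gives $\REV(Z(n))=\tog(Z(n))=1\,u\,0$. Hence $\Omega(\REV(Z(n)))=\Omega(1)\Omega(u)\Omega(0)=0\,(1\,1\,\Omega(u)\,0\,1)$, while $Z(n+1)=\Omega(Z(n))=\Omega(0)\Omega(u)\Omega(1)=0\,1\,\Omega(u)\,0\,1\,1$ again starts with $0$ and ends with $1$, so $\REV(Z(n+1))=\tog(Z(n+1))=(1\,1\,\Omega(u)\,0\,1)\,0$; thus $X_n:=1\,1\,\Omega(u)\,0\,1$ works. (The case $n=0$ is singled out only because $Z(0)$ has no middle letters.)

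With this in hand the proof finishes quickly. Using $\REV(Z(0))=0$ and, for $n\ge 1$, the relation $\REV(Z(n))=X_{n-1}\,0$ from the displayed identity, regroup the letters of $C$:
\[C=0\cdot(X_0\,0)(X_1\,0)(X_2\,0)\cdots=(0\,X_0)(0\,X_1)(0\,X_2)\cdots.\]
On the other hand, since $\Omega$ is a morphism of finite and infinite words (it never erases a letter), $\Omega(C)=\Omega(\REV(Z(0)))\,\Omega(\REV(Z(1)))\,\Omega(\REV(Z(2)))\cdots=(0\,X_0)(0\,X_1)(0\,X_2)\cdots$, using $\Omega(\REV(Z(n)))=0\,X_n$. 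Hence $\Omega(C)=C$, so $C=W=\lim_{n\to\infty}Z(n)$, and combining with Lemma \ref{revrev} gives $\last(K(0))\last(K(1))\cdots=\lim_{n\to\infty}Z(n)$. The one genuinely delicate point is spotting the local identity: that $\Omega\circ\REV$ applied to $Z(n)$ and $\REV$ applied to $Z(n+1)$ differ only by sliding a single boundary letter $0$ from one end to the other, which is exactly what the $\tog=\REV$ equation of Lemma \ref{togrev} (together with the shapes $\Omega(0)=01$, $\Omega(1)=011$) buys us; everything after that is bookkeeping.
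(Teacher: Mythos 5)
Your proof is correct, but it takes a genuinely different route from the paper's. The paper's argument leans on Lemma \ref{glue}: it reverses the decomposition $Z(n+1)=Z(n)Z(n)Z(n-1)\cdots Z(1)1$, applies $\tog(Z(k))=\REV(Z(k))$ from Lemma \ref{togrev} to un-reverse, and thereby reads off that $Z(n+1)$ literally begins with the prefix $0\,\REV(Z(1))\cdots\REV(Z(n))$; letting $n\to\infty$ and invoking Lemma \ref{revrev} finishes the job. You never touch Lemma \ref{glue}. Instead you characterise $\lim_{n\to\infty}Z(n)$ abstractly as the unique infinite word fixed by $\Omega$, and verify that the concatenation $C=\REV(Z(0))\REV(Z(1))\cdots$ is such a fixed point; the verification rests on your boundary-sliding identity $\Omega(\REV(Z(n)))=0\,X_n$ and $\REV(Z(n+1))=X_n\,0$, which you extract from the same Lemma \ref{togrev} (applied to both $Z(n)$ and $Z(n+1)$) together with the shapes of $\Omega(0)$ and $\Omega(1)$. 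Both proofs therefore pivot on the palindromic fact $\tog(Z(n))=\REV(Z(n))$, but they exploit it differently. Your version is more economical in its dependencies — Lemma \ref{glue} is invoked nowhere else in the paper and could be dropped under your approach — and it is more precise about what the passage to the infinite word means: the paper's ``taking the limit as $n\to\infty$'' step is left informal, whereas your prefix-plus-uniqueness argument pins it down. The paper's version, in exchange, stays entirely at the level of explicit finite identities about each $Z(n+1)$ and reuses machinery already established, so it needs no discussion of morphisms acting on infinite words.
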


\begin{proof}

Lemma \ref{glue} gives us that $\forall n>0$,
\begin{equation} Z(n+1) = Z(n)Z(n)Z(n-1)...Z(1)1, \end{equation}
and so
\begin{equation} \REV(Z(n+1)) = 1 \REV(Z(1)) \REV(Z(2))...\REV(Z(n))\REV(Z(n)), \end{equation}
Now, Lemma \ref{togrev} implies $\REV(Z(k))=\tog(Z(k))$, $\forall k >0$. So now we have
\begin{eqnarray}
  Z(n+1) & = & \REV(\REV(Z(n+1))) \\
   & = & \tog( \REV( Z(n+1))) \\
   & = & 0 \REV(Z(1)) \REV(Z(2))...\REV(Z(n))R(\REV(Z(n)))1,
\end{eqnarray}
where $R(x)$ denotes the word obtained by deleting the right most element from $x$.
Now, taking the limit as $n \rightarrow \infty$ we have
\begin{eqnarray}
  \lim_{n \rightarrow \infty} (Z(n)) & = & \lim_{n \rightarrow \infty} (Z(n+1)) \\
   & = & 0 \REV(Z(1)) \REV(Z(2))... \\
   \label{thom1}
   & = & \REV(Z(0)) \REV(Z(1)) \REV(Z(2))... \\
   \label{thom2}
   & = & \last(K(0))\last(K(1))\last(K(2))...
   \end{eqnarray}
Here Lemma \ref{revrev} gives us that Eq. (\ref{thom1}) implies Eq. (\ref{thom2}).
\end{proof}







The above result (Lemma \ref{crit}) is important because a formula is known \cite{A096270} for the infinite sequence of $\lim_{n \rightarrow \infty} (Z(n))$ (which we have just shown to be equal to $\last(K(0))\last(K(1))...$). Using this formula we have $\forall i \geq 0$, that
\begin{equation} \last(K(i)) = \left(\lim_{n \rightarrow \infty} (Z(n)) \right)_i = \lfloor (i+1) \phi \rfloor - \lfloor i. \phi \rfloor -1,\end{equation}
where $\phi = \frac{1+ \sqrt{5}}{2}$ is the golden ratio. Other results are also known for this sequence.
For $T \geq 0$ the index of the $T$th $1$ in the sequence $\lim_{n \rightarrow \infty} (Z(n))$ is well known \cite{A000201} to be given by the formula $\lfloor (T+1) \phi \rfloor$. In other words,
\begin{equation}\left\{ T \geq 0 : \left(\lim_{n \rightarrow \infty} (Z(n))\right)_{T}=1 \right\} = \{ T \geq 0 : \last(K(T))= 1 \} = \{ \lfloor (n+1) \phi \rfloor : n \geq 0 \}\end{equation}

Suppose $h \geq 2$. Now, according to Corollary \ref{corocoro}, $\rho(X^h)_{G^h} = 0$ if and only if $\exists T \geq 0$ such that $h = 3 + 2 T + 2 |\{i \in \{0,1,...,T-1 \}: \last(K(i))=1 \}|$ and $\last(k(T)) = 1$.

Now since  $T \geq 0, \last(k(T)) = 1$ if and only if $\exists n \geq 0 : \lfloor (n+1) \phi \rfloor =T$ we have the following:
\ \\
\emph{$\rho(X^h)_{G^h} = 0$ if and only if $\exists n \geq 0$ such that $h =3 + 2\lfloor (n+1) \phi \rfloor + 2.\Theta$,} where:
\begin{eqnarray}
 \Theta & = & | \{ i \in \{ 0 , 1,...,\lfloor (n+1) \phi \rfloor - 1 \} : \last(K(i))=1 \}| \\
    & = & | ( \{ 0 , 1 ,..., \lfloor (n+1) \phi \rfloor - 1 \} \cap \{ \lfloor (0+1) \phi \rfloor, \lfloor (1+1) \phi \rfloor,... \})| \\
   & = & | \{  \lfloor (0+1) \phi \rfloor, \lfloor (1+1) \phi \rfloor,...,\lfloor ((n-1)+1) \phi \rfloor \}| \\
   & = & n.
   \end{eqnarray}


Hence we have
\begin{equation}\label{one}
\{h \geq 2 : \rho(X^h)_{G^h} = 0 \} = \{ 3 + 2\lfloor (n+1) \phi \rfloor + 2.n : n \geq 0\}.
\end{equation}
Now, the number of vertices $|G^t|$, in $G^t$ will be equal to
\begin{equation} |G^t| = 8 + 2 |\{ i \in \{0,...,t-1\}: \rho(X^i)_{G^i} =b  \}|+ 2|\{ i \in \{0,...,t-1\}: \rho(X^i)_{G^i} =0 \}| \label{impo} \end{equation}
because the number of vertices only changes when the writer is of type $0$ or $b$, and it increases by $2$ each such time. Now Corollary \ref{corocoro} implies \begin{equation} |\{ i \in \{0,...,t-1 \}: \rho(X^i)_{G^i} = b \}| = \left\lfloor \frac{t+1}{2} \right\rfloor -1 + \delta_{t,0} \label{subb} \end{equation} (where $\delta_{i,j}$ is the Kronecker Delta).

Also, $\forall t>5$, we have:
$$|\{ i \in \{0,...,t-1 \}: \rho(X^i)_{G^i} =0 \}|  =  2 + |\{ i \in \{2,3...,t-1 \}: \rho(X^i)_{G^i} =0 \}|$$
\begin{equation} \label{sub0a} \end{equation}
$$|\{ i \in \{0,...,t-1 \}: \rho(X^i)_{G^i} =0 \}|  =  2 + |\{ n \geq 0 : 3 + 2\lfloor (n+1) \phi \rfloor + 2.n <t \}|,$$
\begin{equation} \label{sub0b} \end{equation}
Where we use use  Eq. (\ref{one}) to infer Eq. (\ref{sub0b}) from Eq. (\ref{sub0a}).  Now $\forall n \geq 0$ $\forall t>5$ the following four inequalities are equivalent:
\begin{equation}\label{two}
3 + 2\left\lfloor (n+1) \phi \right\rfloor + 2.n <t
\end{equation}
\begin{equation}\label{three}
\left\lfloor (n+1) \phi \right\rfloor < \left\lfloor \frac{t}{2} \right\rfloor -(n+1)
\end{equation}
\begin{equation}\label{four}
(n+1)\phi^2 < \left\lfloor \frac{t}{2} \right\rfloor
\end{equation}
\begin{equation}\label{five}
n < \left\lceil \left\lfloor\frac{t}{2} \right\rfloor\frac{1}{\phi^2}\right\rceil-1
\end{equation}
The equivalence of inequalities (\ref{two}) and (\ref{three}) can be derived using the fact that $2a <b \Leftrightarrow a < \left\lfloor \frac{b+1}{2} \right\rfloor$, for integers $a$ and $b$. The equivalence of inequalities (\ref{three}) and (\ref{four}) can be derived using the fact that $\lfloor r \rfloor <m \Leftrightarrow r <m$ for a real $r$ and an integer $m$, and the fact that $1+ \phi = \phi^2$.

It follows that when $t>5$ we can rewrite Eq. (\ref{sub0b}) as
\begin{eqnarray}
 |\{ i \in \{0,...,t-1 \}: \rho(X^i)_{G^i} =0 \}| & = & 2+ \left|\left\{ n \geq 0 : n < \left\lceil \left\lfloor\frac{t}{2} \right\rfloor\frac{1}{\phi^2}\right\rceil-1 \right\}\right| \\
    & = & 2+ \left|\left\{ 0,1,...,\left\lceil \left\lfloor\frac{t}{2} \right\rfloor\frac{1}{\phi^2}\right\rceil-2 \right\}\right| \\
    \label{sub0c}
    & = & 2+ \left\lceil \left\lfloor\frac{t}{2} \right\rfloor\frac{1}{\phi^2}\right\rceil-1
    \end{eqnarray}
Now by substituting Eq. (\ref{subb}) and Eq. (\ref{sub0c}) into Eq. (\ref{impo}), we get that when $t>5$, the number of vertices in $G_t$ is
\begin{equation}|G^t| = 8 + 2 \left(\left\lfloor \frac{t+1}{2} \right\rfloor -1 \right) + 2 \left( 2 + \left\lceil \left\lfloor\frac{t}{2} \right\rfloor\frac{1}{\phi^2}\right\rceil-1 \right), \label{finall} \end{equation}
as required. It can also be checked that Eq. (\ref{finall}) gives the correct answer for each $t \in \{0,1,2,3,4,5 \}$. Q.E.D.

\ \\

\end{document}